\definecolor{codegreen}{rgb}{0,0.6,0}
\definecolor{codegray}{rgb}{0.5,0.5,0.5}
\definecolor{codepurple}{rgb}{0.58,0,0.82}
\definecolor{backcolour}{rgb}{0.95,0.95,0.92}
\definecolor{assump}{rgb}{0.1,0.95,0.2}
\definecolor{proofob}{rgb}{0.8,0.15,0.2}
\newcommand{\rohit}[1]{\textcolor{red}{#1}}
\newcommand{\uclid}{\textsc{Uclid5}}
\newcommand{\folt}{\mathit{FOL}(\mathcal{T})}
\newcommand{\sigt}{\Sigma_{\mathcal{T}}}
\newcommand{\eqrel}{\simeq{}}
\newcommand{\noteq}{\Delta_{\trv{j}, \trv{k}}}
\newcommand{\myimpl}{\Rightarrow}
\newcommand{\myiff}{\Leftrightarrow}
\newcommand{\unsat}{\ensuremath{\mathsf{unsat}}}
\newcommand{\sat}{\ensuremath{\mathsf{sat}}}
\newcommand{\minit}{\ensuremath{\mathit{Init}}}
\newcommand{\mtr}{\ensuremath{\mathit{Tx}}}
\newcommand{\mname}{M}
\newcommand{\machorig}{{\mname{} = \langle X, \minit{}(X), \mtr{}(X,X') \rangle}}
\newcommand{\valueof}[2]{{#2}({#1})}
\newcommand{\qformulasub}{\forall \trv{0}.~\countq{\trv{1}}{\noteq}{\varphi_{\trv{0}, \trv{1}}}{\triangleleft}{N(Z)}}
\newcommand{\qformula}{\forall \trv{0}.~\countq{\trv{1}}{\noteq}{\varphi}{\triangleleft}{N(Z)}}
\newcommand{\qformulage}{\forall \trv{0}.~\countq{\trv{1}}{\noteq}{\varphi}{\geq}{N(Z)}}
\newcommand{\qformulale}{\forall \trv{0}.~\countq{\trv{1}}{\noteq}{\varphi}{\leq}{N(Z)}}
\newcommand{\qformulavalid}[1]{\forall \trv{0}.~\countq{\trv{1}}{\noteq}{\varphi}{#1}{\countsat{\valid{Y, Z}{}{}}{Y}}}
\newcommand{\qformulavalidprime}[1]{\forall \trv{0}.~\countq{\trv{1}}{\noteq}{\varphi}{#1}{\countsat{\validprime{Y, Z}{}{}}{Y}}}
\newcommand{\fixy}{\mathtt{Y}}
\newcommand{\fixz}{\mathtt{Z}}
\newcommand{\enum}[1]{\mathsf{Y}}
\newcommand{\enumi}[1]{\enum{}[#1]}
\newcommand{\true}{\mathit{true}}
\newcommand{\false}{\mathit{false}}
\newcommand{\obs}[1]{\mathit{obs}(#1)}
\newcommand{\secret}[1]{\mathit{secret}(#1)}
\newcommand{\code}[1]{\texttt{#1}}
\newcommand{\trv}[1]{
  \ifthenelse{\isempty{#1}}
    {\pi}
    {\pi_{#1}}}
\newcommand{\trc}[1]{
  \ifthenelse{\isempty{#1}}
    {\tau}
    {\tau_{#1}}}
\newcommand{\trcj}[1]{
    {\tau_{#1}}}
\newcommand{\trci}[2]{
  \ifthenelse{\isempty{#1}}
    {\tau^{#2}}
    {\tau_{#1}^{#2}}
}
\newcommand{\trcij}[2]{
    {\tau_{#1}^{#2}}
}
\newcommand{\traceset}{\Phi}
\newcommand{\traces}[1]{\traceset{}_{#1}}
\newcommand{\cntv}{\mathcal{C}}
\newcommand{\tracesetcnt}{\traceset{}_{\cntv}}
\newcommand{\tracesetq}{\traceset{}_\varphi}
\newcommand{\countq}[5]{\# #1\!\!:\!#2.~#3 ~#4\,#5}
\newcommand{\countqinst}{\countq{\trv{}}{\noteq}{\varphi}{\triangleleft}{N(Z)}}
\newcommand{\parameter}[1]{
  \ifthenelse{\isempty{#1}}
    {p}
    {p_{#1}}}
\newcommand{\varset}{\mathit{Vars}}
\newcommand{\valid}[2]{
  \ifthenelse{\isempty{#2}}
    {\mathcal{V}
      {
        \ifthenelse{\isempty{#1}}{}{(#1)}
      }
    }
    {\mathcal{V}_{#2}(#1)}
  }
\newcommand{\validprime}[2]{
  \ifthenelse{\isempty{#2}}
    {\mathcal{V'}
      {
        \ifthenelse{\isempty{#1}}{}{(#1)}
      }
    }
    {\mathcal{V'}_{#2}(#1)}
  }
\newcommand{\tren}[1]{
    \ifthenelse{\isempty{#1}}
        {\mathcal{U}}
        {\mathcal{U}(#1)}
    }
\newcommand{\trel}[1]{
    \ifthenelse{\isempty{#1}}
        {\mathcal{E}}
        {\mathcal{E}(#1)}
}
\newcommand{\ifunx}[2]{
    \ifthenelse{\isempty{#2}}
        {\mathscr{I}_{#1}}
        {\mathscr{I}_{#1}(#2)}
}
\newcommand{\sfuny}[2]{
    \ifthenelse{\isempty{#1}}
        {\mathscr{S}_{#1}}
        {\mathscr{S}_{#1}(#2)}
}
\newcommand{\fcnt}[1]{
  \ifthenelse{\isempty{#1}}
  {\mathscr{C}{}}
  {\mathscr{C}(#1)}}
\newcommand{\extn}[1]{
  \ifthenelse{\isempty{#1}}
  {\mathrm{extn}{}}
  {\mathrm{extn}(#1)}}
\newcommand{\stat}[1]{
  \ifthenelse{\isempty{#1}}
    {\sigma}
    {\sigma^{#1}}}
\newcommand{\ltlu}[2]{#1\,\mathsf{\textbf{U}}\,#2}
\newcommand{\ltlg}[1]{\mathsf{\textbf{G}}\,#1}
\newcommand{\ltlx}[1]{\mathsf{\textbf{X}}\,#1}
\newcommand{\ltlf}[1]{\mathsf{\textbf{F}}\,#1}
\newcommand{\predonly}{\mathcal{P}}
\newcommand{\predsub}[1]{\mathcal{P}_{#1}}
\newcommand{\sizeof}[1]{|#1|}
\newcommand{\decr}{\code{decr}}
\newcommand{\countsat}[2]{{\#}{#2}.\,{#1}}
\newcommand{\verf}{\valid{\enum{},\numrnds{}}{f}}
\newcommand{\verfp}{\valid{\enum{},\numrnds{}-1}{f}}
\newcommand{\verfo}{\valid{\enum{},1}{f}}
\newcommand{\vero}{\valid{\enum{},\numrnds{}}{1}}
\newcommand{\ver}{\valid{\enum{},\numrnds{}}{}}
\newcommand{\cverf}{\countsat{\verf{}}{\enum{}}}
\newcommand{\cverfo}{\countsat{\verfo{}}{\enum{}}}
\newcommand{\cverfp}{\countsat{\verfp{}}{\enum{}}}
\newcommand{\cvero}{\countsat{\vero{}}{\enum{}}}
\newcommand{\cver}{\countsat{\ver{}}{\enum{}}}
\newcommand{\challenge}{\mathsf{C}}
\newcommand{\response}{\mathsf{P}}
\newcommand{\responsei}[1]{\mathsf{P}[#1]}
\newcommand{\responseij}[1]{\mathsf{P}_0[#1]}
\newcommand{\responseik}[1]{\mathsf{P}_1[#1]}
\renewcommand{\success}{\mathsf{S}}
\newcommand{\numrnds}{\mathsf{R}}
\newcommand{\numblks}{\mathsf{numBlks}}
\newcommand{\stashsz}{\mathsf{stashSz}}
\newcommand{\pmap}[1]{
  \ifthenelse{\isempty{#1}}
    {\mathsf{pMap}}
    {\mathsf{pMap}_{#1}}}
\newcommand{\request}[1]{
  \ifthenelse{\isempty{#1}}
    {\mathsf{request}}
    {\mathsf{request}_{#1}}}
\newcommand{\leaf}[1]{
  \ifthenelse{\isempty{#1}}
    {\mathsf{leaf}}
    {\mathsf{leaf}_{#1}}}
\newcommand{\remap}{\mathsf{remap}}
\newcommand{\perm}{\mathsf{Y}}
\newcommand{\permhat}{\mathsf{W}}
\newcommand{\dom}[1]{\mathit{dom}(#1)}
\begin{document}
\ifthenelse{\boolean{arxiv}}{
    \title{
        Verification of Quantitative Hyperproperties Using Trace Enumeration Relations
        \thanks{This is an extended version of a paper with the same title that appears at CAV 2020.}
    }
    \author{
        Shubham Sahai\inst{1} \and Pramod Subramanyan\inst{1} \and Rohit Sinha\inst{2}
    }
}
{
    \title{
        Verification of Quantitative Hyperproperties Using Trace Enumeration Relations
    }
    \author{
        Shubham Sahai\inst{1}\orcidID{0000-0002-3434-6937} \and \\
        Pramod Subramanyan\inst{1}\orcidID{0000-0003-2288-3396} \and \\
        Rohit Sinha\inst{2}\orcidID{0000-0001-9107-0239}
    }
}
\titlerunning{Verification of QHPs Using Trace Enumeration Relations}
% If the paper title is too long for the running head, you can set
% an abbreviated paper title here
%

% \authorrunning{}
% First names are abbreviated in the running head.
% If there are more than two authors, 'et al.' is used.
%
\institute{
    Indian Institute of Technology, Kanpur \and
    Visa Research\\
    %\email{ssahai@cse.iitk.ac.in, rohit.sinha@visa.com, spramod@cse.iitk.ac.in}
}

\maketitle              % typeset the header of the contribution

\begin{abstract}
    Many important cryptographic primitives offer probabilistic guarantees of security that can be specified as quantitative hyperproperties; these are specifications that stipulate the existence of a certain number of traces in the system satisfying certain constraints.
    Verification of such hyperproperties is extremely challenging because they involve simultaneous reasoning about an unbounded number of different traces.
    In this paper, we introduce a technique for verification of quantitative hyperproperties based on the notion of trace enumeration relations.
    These relations allow us to reduce the problem of trace-counting into one of model-counting of formulas in first-order logic.
    We also introduce a set of inference rules for machine-checked reasoning about the number of satisfying solutions to first-order formulas (aka model counting).
    Putting these two components together enables semi-automated verification of quantitative hyperproperties on infinite state systems.
    We use our methodology to prove confidentiality of access patterns in Path ORAMs of unbounded size, soundness of a simple interactive zero-knowledge proof protocol as well as other applications of quantitative hyperproperties studied in past work.
\end{abstract}

%--------------------------------------------------------------------------------------------------
%                   Sections
%--------------------------------------------------------------------------------------------------

\section{Introduction}

Recent years have seen significant progress in automated and semi-automated techniques for the verification of security requirements of computer systems~\cite{BartheCsfw04, TerauchiSas05, clarkson-14, hyperltl-15, obsdet-03, CHL16, QCHL17, KoskinenPldi17}. %TODO: cite: CHL, QCHL, papers by Barthe.
Much of this progress has built on the theory of \emph{hyperproperties}~\cite{ClarksonS10}, and
%, roscoe-sp-95, Mclean-noninterference, goguen-sp82}.
%These are specifications which involve simultaneous reasoning over multiple executions.\footnote{Hyperproperties stand in contrast to \emph{trace properties}: trace property satisfaction can be decided by examining a single execution (trace) at a time.}
these have been used extensively in analysis of whether systems satisfy secure information flow properties~\cite{AlmeidaUsenix16, Almeida-13,secverilog-15, secverilog-17, hawblitzel-osdi14, tap-ccs17, spectector-18, csf-19, ccs-14, fse-16,fmcad-19} such as observational determinism~\cite{obsdet-03, roscoe-sp-95} and non-interference~\cite{goguen-sp82}. 
%When used in the context of confidentiality, observational determinism is satisfied by a system if an adversary's observations are a deterministic function of the adversary's inputs.
%Similar notions of confidentiality and integrity have been used in the security verification of a number of diverse applications.
Unfortunately, the security specification of several important security primitives cannot be captured by secure information flow properties like observational determinism. 
In particular, observational determinism and non-interference are not applicable when reasoning about algorithms that offer probabilistic -- as opposed to deterministic -- guarantees of confidentiality and integrity. Prominent examples of security primitives offering probabilistic guarantees include Path ORAM~\cite{path-oram-2013} and various zero-knowledge proof protocols.
%Path ORAM is an oblivious RAM (ORAM) protocol that allows a client with a small amount of storage to access a much larger server store while preserving the confidentiality of client access patterns.
%such as verification of cryptographic software~\cite{AlmeidaUsenix16, Almeida-13}, in the construction of hardware description languages which can ensure strong side-channel isolation~\cite{secverilog-15, secverilog-17}, the construction of an operating system and library software that provides end-to-end security guarantees~\cite{hawblitzel-osdi14} and enclave platform security specification~\cite{tap-ccs17}; extensions of observational determinism have been be used to reason Spectre and Meltdown~\cite{spectector-18, csf-19}.
%There are two problems with the application of observational determinism to Path ORAM.
%First, due to the randomized nature of Path ORAM, even the same sequences of adversary inputs and trusted inputs could result in different adversary observations.
%Second, Path ORAM does not guarantee that different client inputs will result in identical adversary observations.
%While observation determinism is violated by Path ORAM, it is important to emphasize that this does not imply that Path ORAM is insecure. 
%Instead, the problem is that observational determinism is not expressive enough to capture the security of Path ORAM, which relies on a statistical guarantee of indistinguishability as opposed to a deterministic guarantee.

A promising direction for the verification of such protocols are the class of quantitative hyperproperties~\cite{mcqhyper-2018}, one example of which is deniability~\cite{deny-17,deny-18}. 
Deniability states that for every infinitely-long sequence of observations that an adversary makes, there are (exponentially) many different secrets that could have resulted in exactly these observations. 
Therefore, the adversary learns very little about the secrets in an execution from a particular sequence of observations. 

How does one prove a quantitative hyperproperty like deniability?
Suppose our goal is to show that for every trace of adversary observations, there exist $2^{n}$ traces with the same observations but different secrets. 
Here $n$ is a parameter of the system, e.g., the length of a password in bits.
One option, first suggested by Yasuoka and Terauchi~\cite{yasuoka2014quantitative} and recently revisited by Finkbeiner, Hahn, and Torfah~\cite{mcqhyper-2018}, is to consider the following $k$-trace property, where $k=2^{n}+1$.
\begin{align*}
    \forall & \trv{0}.~\exists \trv{1}, \trv{2}, \dots, \trv{2^{n}}.~ & \\
        & \Big(\bigwedge_{j=1}^{2^{n}} \obs{\trv{0}} = \obs{\trv{j}} \Big)
          \land
          \Big(\bigwedge_{j=1}^{2^{n}} \bigwedge_{k=1}^{2^{n}}~ 
            (j \neq k) \myimpl \secret{\trv{j}} \neq \secret{\trv{k}}\Big) & \nonumber
\end{align*}

The property states that for every trace of the system, there must exist $2^{n}$ other traces with identical observations and pairwise different secrets.
In the above, $\trv{0}, \trv{1}, \dots$ represent trace variables, $\obs{\trv{j}}$ refers to the trace of adversary observations projected from the trace $\trv{j}$, while $\secret{\trv{j}}$ refers to the trace of secret values in the trace $\trv{j}$.
There are at least three problems with the verification of the above property.
%\begin{enumerate}
%   \item 
First, the size of this property grows exponentially with $n$; verification needs to reason about $2^n$ traces simultaneously and is not scalable.
%   \item
The second problem is quantifier alternation.
Even if we could somehow reason about $2^n$ traces, we have to show that \emph{for every} trace $\trv{0}$, there \emph{exist} $2^n$ other traces 
%$\trv{1}, \dots, \trv{2^n}$ 
satisfying the above condition.
%This makes the problem challenging and a na\"ive implementation of such a verifier could devolve into individually enumerating traces captured by the existential quantifier.
%The problem is challenging and quantifier alternation over traces is generally not supported even by tools that natively support hyperproperty verification.
%   \item
The third problem is that the above technique does not work for \emph{symbolic} bounds. 
While it is possible -- at least in principle -- to use the above construction by picking a specific value of $n$, say $16$, to show that $2^{16}$ traces exist that satisfy deniability, we would like to show that the property holds for all $n$, where $n$ is a state variable or parameter of the transition system.
Capturing the dependence of the trace-count bound on parameters, such as $n$, is important because it shows that the attacker has to work exponentially harder as $n$ increases.
Such general proofs are not possible by reduction to a $k$-trace property because the construction requires $k$ be bounded.
%end{enumerate}

        Recent work by Finkbeiner, Hahn, and Torfah~\cite{mcqhyper-2018} has made significant progress in addressing the first two problems by showing a reduction from $k$-trace property checking into the problem of maximum model counting~\cite{fremont-17}.
        However, their technique still produces a propositional formula whose size grows exponentially in the size of the quantitative hyperproperty.
        Further, model counting itself is a computationally hard problem that is known to be $\#P$-complete, and maximum model counting is even harder.
        As a result, their technique does not scale well and times out on the verification of an 8-bit leakage bound for an 8-bit password.
        Finally, their method does not support symbolic bounds, and therefore cannot be used to verify parametric systems; we verify several examples of such systems in this paper (e.g., Path ORAM~\cite{path-oram-2013} of symbolic size).

%\todonote{Need some text bridging the previous and next paragraph.}

    In this  work, we propose a new technique for quantitative hyperproperty verification that addresses each of the above problems.
    Our approach is based on the following insights.
    First, instead of trying to count the number of traces that have the same observations and different inputs, we instead show injectivity/surjectivity from satisfying assignments of a first-order formula to traces of a transition system.
    This allows us to bound the number of traces satisfying the quantitative hyperproperty by the number of satisfying solutions to this formula.
    We introduce the notion of a trace enumeration relation to formalize this relation between the first-order formula and traces of the transition system.
    An important advantage of the above reduction is that proving the validity of a trace enumeration relation is only a hyperproperty -- not a quantitative hyperproperty.
    %This allows us to build on the many recent advances in techniques for efficient verification of hyperproperties~\cite{CHL16,lazysc-cav18,decomp-17,shemer-cav-19}.

        Next, we develop a novel technique to bound the number of satisfiable solutions to a first-order logic formula, which is of independent interest.
        While this is a hard problem, we exploit the fact that our formulas have a significant amount of structure.
        We introduce a set of inference rules inspired by ideas from  enumerative combinatorics~\cite{wilf-2005,bjorner-2010,zb-2010}.
        These rules allow us to bound the number of satisfying assignments to a formula by making only satisfiability queries.

    In summary, our techniques can prove quantitative hyperproperties with symbolic bounds on parametric infinite-state systems.
        We demonstrate their utility by verifying representative quantitative hyperproperties of diverse applications.

\subsection*{Contributions}
%The contributions of this paper are as follows.
\begin{enumerate}
    \item We introduce a specification language for quantitative hyperproperties (QHPs) over symbolic transition systems and define formal satisfaction semantics for this language.
        Our specification language is more expressive than past work on QHP specification because it allows the bound to be a first-order formula over the state variables of the transition system.
    \item We provide several examples of QHPs relevant to security verification.
        We identify a new class of QHPs, referred to as soundness hyperproperties, applicable to protocols that provide statistical guarantees of integrity.
    \item We propose a novel semi-automated verification methodology for proving that a system satisfies a QHP.
        Our  methodology applies to properties that involve a single instance of quantifier alternation and works by reducing the problem of QHP verification to that of checking non-quantitative hyperproperties over two and three traces of the system and counting satisfiable solutions to a formula in first-order logic.
    \item We introduce a set of inference rules for bounding the number of satisfiable solutions to a first-order logic formula, using only satisfiability queries.

    \item We demonstrate the applicability of our specification language  and verification methodology by providing proofs of security for Path ORAM, soundness of a simple zero-knowledge protocol, as well as examples taken from prior work on quantitative security specifications.
        We show that our verification methodology scales to larger systems than could be handled in prior work.
        To the best of our knowledge, our work is the first machine-checked proof of confidentiality of the access patterns in Path ORAM.
\end{enumerate}

\begin{comment}
The rest of this  paper is organized as follows.
        Section~\ref{sec:prelim} presents our specification language for QHPs.
        Section~\ref{sec:motiv} introduces the system model, QHP, and verification challenges for a running example that is used in the rest of the paper.
        Section~\ref{sec:enum} defines trace enumerations and discusses their applicability to QHP verification.
        Section~\ref{sec:counting} discusses our methodology for bounding the number of satisfiable solutions to a first-order logic formula.
        Section~\ref{sec:eval} contains the experimental evaluation.
        Section~\ref{sec:related} discusses related work and Section~\ref{sec:concl} provides concluding remarks.
\end{comment}

\section{Motivating Example}
\label{sec:motiv}
  
In this section, we first introduce the model of transition systems used in this paper.
We then discuss quantitative hyperproperty (QHP) specification and verification for our running example -- a simple zero-knowledge puzzle.

\subsection{Preliminaries}

Let $\folt$ denote first-order logic modulo a theory $\mathcal{T}$. 
The theory $\mathcal{T}$ is assumed to be multi-sorted, includes the theory of linear integer arithmetic (LIA), and contains the $=$ relation.
Let $\sigt$ be the theory $\mathcal{T}$'s signature: the set consisting of the constant, function, and predicate symbols in the theory. 
We say that a formula is a $\sigt$-formula if it consists of the symbols in $\sigt$ along with variables, logical connectives, and quantifiers.
We only consider theories which are such that the set of satisfying assignments for any $\sigt$-formula is a countable set.\footnote{Our experiments mostly use the AUFLIA theory which allows arrays, uninterpreted functions, and linear integer arithmetic.}

For every variable $x$, we will assume there exists a unique variable $x'$, which we refer to as the primed version of $x$.
We will use $X$, $Y$, and $Z$ to denote sets of variables.
Given a set of variables $X$, we will use $X'$ to refer to the set consisting of the primed version of each variable in $X$, that is $X' = \{ x'~|~x \in X \}$.
Similarly $X_1$, $X_2$, etc. are sets consisting of new variables defined as follows: $X_1 = \{ x_1~|~x \in X \}$ and $X_2 = \{ x_2~|~x \in X \}$.
We will use $F(X)$ to denote the application of a function or predicate symbol $F$ on the variables in the set $X$.
% Given a function or predicate symbol $F$ that applied on the variables in the set $X$, we will write it as $\gls{fx}$. 
A satisfying assignment $\sigma$ to the formula $F(X)$ is written as $\sigma \models F(X)$. 
Given a formula $F(X)$ and a satisfying assignment $\sigma$ to this formula, we will denote the valuation of the variable $x \in X$ in the assignment $\sigma$ as $\valueof{x}{\sigma}$.
We will abuse notation in two ways and also write $\valueof{X}{\sigma}$ to refer to a map from the variables $x \in X$ to their assignments in $\sigma$.
We will also write $\valueof{G(X)}{\sigma}$ to denote the valuation of the term $G(X)$ under the assignment $\sigma$.

The number of satisfiable assignments for the variables in the set $X$ to a formula $F(X, Y)$ as a function of the variables $Y$ will be denoted by $\countsat{F(X, Y)}{X}$. $\countsat{F(X, Y)}{X}$ is the function $\lambda \mathrm{Y}~.~ |\{ \valueof{X}{\sigma} ~|~ \sigma \models F(X, \mathrm{Y}) \}|$ evaluated at $Y$; $|S|$ is the cardinality of the set $S$.
For example, consider the predicate $f(i, n) \doteq (0 \leq i < 2n)$.
In this case, $\countsat{f(i, n)}{i} = \max{(0, 2n)}$, meaning that for a given value of $n > 0$, there are $2n$ satisfying assignments to $i$.
% $\countsat{F(X, Y)}{X}$ models the cardinality of the set $\{ \valueof{X}{\sigma} ~|~ \sigma  \models F(X, Y) \}$.
%Note we cannot reason about model counts within $\folt$, so in $\sigt$-formulas $\countsat{F(X,Y)}{X}$ is uninterpreted.
%The model counting inference rules (introduced in Section~\ref{sec:counting}) are used separately to bound these model counts.

%Given a variable $x$, we will use $\domainof{x}$ to refer to its domain -- the set which assignments to $x$ are drawn from.
%Here too, we will abuse notation and use $\domainof{X}$ to refer to map from the variables $x \in X$ to their respective domains.

\begin{definition}[Transition System]\label{defn:symb-TS}
    A transition system $M$ is defined as the tuple $\machorig{}$. $X$ is a finite set of (uninterpreted) constants that represents the state variables of the transition system.
$\minit$ and $\mtr$ are $\sigt$-formulas representing the initial states and the transition relation, respectively.
$\minit$ is defined over the signature $\sigt \cup X$. $\mtr$ is over the signature $\sigt \cup X\cup X'$; $X$ represents the pre-state of the transition and $X'$ represents its post-state.
\end{definition}

A state of the system is an assignment to the variables in $X$.
We use $\stat{0}, \stat{1}, \stat{2}$ etc. to represent states.
A trace of the system $M$ is an infinite sequence of states $\trc{} = \stat{0}\stat{1}\stat{2}\dots$ $\stat{i}$ $\dots$ such that $\minit(\stat{0})$ is valid and for all $i \geq 0$, $\mtr(\stat{i}, \stat{i+1})$ is valid; in order to keep notation uncluttered, we will often drop the $\geq 0$ qualifier when referring to trace indices.
We assume that every state of the transition system has a successor: for all $\stat{}$ there exists some $\stat{}'$ such that $\mtr(\stat{}, \stat{}')$ is valid,
ensuring every run of the system is infinite.
We will represent traces by $\tau, \tau_1, \tau_2$, etc.
Given a trace $\trc{}$, we refer to its $i^{th}$ element by $\tau^i$.
If $\trc{} = \stat{0}\stat{1}\dots$, then $\trci{}{0} = \stat{0}$ and $\trci{}{1} = \stat{1}$.
The notation $\tau^{[i,\infty]}$ refers to the suffix of trace  $\trc{}$ starting at index $i$.
The set of all traces of the system $M$ is denoted by $\traces{M}$.  
Given a state $\stat{}$ and a variable $x \in X$, $\valueof{x}{\stat{}}$ is the valuation of $x$ in the state $\stat{}$.

\subsection{Motivating Example: Zero-Knowledge Hats}

  %\subsubsection{Zero Knowledge Proofs}
\label{sec:motivating-example} 
Zero-knowledge (Z-K) proofs are constructions involving two parties: \emph{a prover} and  \emph{a verifier}, where the prover's goal is to convince the verifier about the veracity of a given statement without revealing any additional information. 
We motivate the need for quantitative hyperproperty verification using a Z-K puzzle.

\subsubsection{Puzzle Overview:}
  Consider the following scenario. Peggy has a pair of otherwise identical hats of different colors (say, yellow and green). 
  She wants to convince Victor, who is yellow-green color blind, that the hats are of different colors, without revealing the colors of the hats. 
  This problem can be solved using the following interactive protocol. % shown in Figure~\ref{fig:hat-protocol}.
  %\begin{figure}[h]
  %  \centering
  %  \includegraphics[width=\columnwidth]{images/hat-protocol.pdf}
  %  \caption{Interaction between Prover and Verifier. }
  %  \label{fig:hat-protocol}
  %\end{figure}
  Peggy gives both hats to Victor, and Victor randomly chooses a hat behind a curtain and shows it to Peggy.
  Next, he goes back behind the curtain and uniformly randomly chooses if he wants to switch the hat or not.
  He now appears in front of Peggy and asks: ``Did I switch?''

  If the hats are really of different colors, Peggy will be able to answer correctly with probability 1. 
  If Peggy is cheating -- the hats are in fact of the same color -- her best strategy is to guess, and with probability $0.5$ she will answer incorrectly. 
  If the interaction is repeated $k$-times, Peggy will be caught with probability $1 - 2^{-k}$. 
  The interaction between Peggy and Victor only reveals the fact that Peggy can detect a switch and not the color of the hat, making this zero-knowledge.  
\subsubsection{Verification Objectives:}
  A zero-knowledge proof must satisfy three properties: \emph{completeness} (an honest prover should be able to convince an honest verifier of a true statement), \emph{soundness} (a cheating prover can convince an honest verifier with negligible probability) and \emph{zero-knowledge} (no information apart from the veracity of the statement should be revealed). 
  Completeness is a standard trace property, while zero-knowledge is the 2-safety property of indistinguishability.
  Consequently, the main challenge in automated verification of the zero-knowledge protocol described above is that of soundness.
  In this section, we discuss its specification and verification using quantitative hyperproperties.

  \begin{figure}[htbp]
    \begin{mdframed}[innerrightmargin=0pt, innerleftmargin=3pt,innertopmargin=-5pt,innerbottommargin=3pt]
        \begin{align*}
        &X          &\doteq~ & \{\challenge, \response, \success, i, \numrnds\}           \\
        &\minit(X)  &\doteq~ & (\forall i.~0 \leq \challenge[i] \leq 1) \land
                               (\forall i.~0 \leq \response[i] \leq 1) \land
                                \success \land (i = 1) \land (\numrnds > 0)                \\
        &\mtr(X,X') &\doteq~ & (\challenge' = \challenge) \land (\response' = \response)  \land (\numrnds{}' = \numrnds{}) \land
                                \big(\success' = \big( \success \land (\challenge[i] = \response[i])\big) \big) ~\land \\
        &           &        &  i' = \min{(i+1,\numrnds{})}~
      \end{align*}
    \end{mdframed}
    \caption{Transition system model of the example protocol.}
    \label{fig:TS-hat}
    \ifthenelse{\NOT{\boolean{arxiv}}}{
      \vspace{-0.2in} % HACK (pramod}
    }{}
  \end{figure}
  \subsubsection{Soundness as a Quantitative Hyperproperty:}
  \label{sec:ex-prop-desc}
  Consider the transition system  $\machorig{}$, shown in Figure \ref{fig:TS-hat}, representing this protocol.
  The variable $\numrnds{}$ is a \emph{parameter} of the system and refers to the number of rounds of the protocol.
  $\challenge$ and $\response$ are boolean arrays representing the challenges from the verifier to the prover, and the responses from the prover to the verifier, respectively.  $i$ is the current round, and $\success{}$ is a boolean flag which corresponds to whether the zero-knowledge proof has succeeded.
  $\challenge$ and $\response$ are initialized non-deterministically to model the fact that the verifier chooses their challenges randomly, and a cheating prover's best strategy is guessing. 
    While a cheating prover can use any strategy, if the challenges are indistinguishable to her, then the best strategy is to sample responses from a uniform distribution.
  
\begin{comment}
  Recall that trace of the system $M$ is an infinite sequence of states $\trc{} = \stat{0}\stat{1}\stat{2}\dots\stat{i}\dots$ such that $\minit(\stat{0})$ holds and for all $i$, $\mtr(\stat{i}, \stat{i+1})$ holds. 
  Suppose $\mathit{challenge(\stat{})}$ is the term that denotes if the hats were switched in state $\stat{}$, i.e. $\mathit{challenge(\stat{})} \doteq \true$ if the hats were switched in state $\stat{}$ and $\false$ otherwise, and $\mathit{response(\stat{})}$ be the term denoting the response of Peggy in state $\stat{}$. Let $\mathit{success(\stat{})}$ be the predicate which represents the correct response being provided to the challenge in the state $\stat{}$ as well as all its predecessor states.
  Let $\mathit{finished}(\stat{})$ be the predicate, which is true when the $\numrnds{}$ rounds of the protocol have been completed, i.e., the protocol is done.
  Then, for all $i$ the following holds: (a) $challenge(\stat{i}) \doteq \challenge[i]$, (b) $response(\stat{i}) \doteq \response[i]$, (c) $success(\stat{i}) \doteq success(\stat{i-1}) \land (challenge(\stat{i}) = response(\stat{i}))$, also represented by the state variable $\success$, and (d) $finished(\stat{i}) \doteq (i = \numrnds)$.
\end{comment}
%  For an honest prover Peggy, the following holds: $\forall \trv{}. ~\ltlg{\mathit{success}_{\trv{}}}$, denoting the \emph{completeness} of the system $M$.

  Soundness is captured by the following quantitative hyperproperty (QHP):
  \begin{align}
      \forall \trv{0}.\countq{\trv{1}}{\ltlf{(\delta_{\trv{j}, \trv{k}})}}{\ltlg{(\psi_{\trv{0}, \trv{1}})}}{\geq}{2^\numrnds - 1}
    \label{eqn:motiv-ex}
  \end{align}
  We will provide formal satisfaction semantics for QHPs in Section~\ref{sec:prelim}. 
  For now, we informally describe its meaning.
  The term $\countq{\trv{1}}{\ltlf{(\delta_{\trv{j}, \trv{k}})}}{\ltlg{(\psi_{\trv{0}, \trv{1}})}}{\geq}{2^\numrnds{}-1}$ introduces a counting quantifier which stipulates the existence of at least $2^\numrnds - 1$ traces satisfying certain conditions:
  %The property states that for every trace $\trv{0}$, there must exist at least $2^\numrnds - 1$ other traces.
  (i) these traces must all be pairwise-different, where difference is defined by satisfaction of the formula $\ltlf{(\delta_{\trv{j}, \trv{k}})}$ and (ii)
  all of these traces must be related to trace $\trv{0}$ by the relation $\ltlg{(\psi_{\trv{0}, \trv{1}})}$. 

  The state predicates $\delta$ and $\psi$ are defined as follows.
  \begin{align*}
      \delta(\stat{}_1, \stat{}_2) ~\doteq~ &
          \valueof{\responsei{i}}{\stat{}_1} \neq
          \valueof{\responsei{i}}{\stat{}_2} & \nonumber \\
      \psi(\stat{}_1, \stat{}_2) ~\doteq~ & 
            \big(\valueof{(i = \numrnds) \myimpl \success}{\stat{}_1} \myimpl
            \valueof{(i = \numrnds) \myimpl \lnot\success}{\stat{}_2}\big) 
            ~~~\land\\
            & 
            \big(\valueof{\challenge{}}{\stat{}_1}= 
            \valueof{\challenge{}}{\stat{}_2} \land 
            \valueof{\numrnds{}}{\stat{}_1} = \valueof{\numrnds{}}{\stat{}_2} \big)& \nonumber
      %\label{eqn:delta-psi-ex}
  \end{align*}

  The requirement imposed by $\delta$ is that Peggy's responses be different at some step $i$ for every pair of traces captured by the counting quantifier.  $\psi$ says that if trace $\trv{0}$ is a trace where Peggy's cheating succeeds (i.e., $\success{} = \true$ when $i = \numrnds{}$), then in all traces captured by $\trv{1}$, the challenges and number of rounds are the same as $\trv{0}$ but Peggy's cheating is detected by Victor (i.e., $\success = \false$ when $i = \numrnds{}$).
  These requirements are illustrated in Figure~\ref{fig:tep-qhp}(b).
\begin{figure}[!htb]
      \begin{center}
    \begin{tikzpicture}
      \newcommand{\pictext}[1]{\footnotesize{#1}}
      \newcommand{\smalltxt}[1]{\scriptsize{#1}}
      \tikzstyle{state}=[circle,minimum width=0.5cm,inner sep=0.05cm,
                         draw,node distance=1.25cm,fill=green!05];
      \foreach \s/\r/\lb/\pos/\col/\fcol/\scol/\lab in {
          1/0/0/0/blue!10/blue!10/blue!10/success, 
          2/-1/1/1/blue!10/red!30/red!30/fail, 
          3/-2/2/2/blue!10/red!30/red!30/fail, 
          n/-3.5/\cntv{}/k/blue!10/red!30/red!30/fail
      }
      {
          \node[state,fill=\col] (s0\s) at (0,\r) 
             {\pictext{$\trcij{\lb}{0}$}};
          \node[left of=s0\s,node distance=0.7cm] (t\s) {\pictext{$\trc{\lb}$}};
          \node[state, fill=\col, right of=s0\s] (s1\s)
             {\pictext{$\trcij{\lb}{1}$}};
          \node[state, fill=\col, right of=s1\s] (s2\s)
             {\pictext{$\trcij{\lb}{2}$}};
          \node[state, fill=\scol, right of=s2\s, node distance=2cm] (sk\s)
             {\pictext{$\trcij{\lb}{k}$}};
          \node[above of=sk\s,node distance=0.5cm] (l\s) {\pictext{\lab}};
          
          \node[state,fill=\fcol] at (s\pos\s) {\pictext{$\trcij{\lb}{\pos}$}};

         \draw[->] (s0\s) -- (s1\s);
         \draw[->] (s1\s) -- (s2\s);
         \draw[->] (s2\s) -- coordinate[midway](sj\s) 
                   node[fill=white] {\pictext{$\dots$}} (sk\s);
         \draw[->] (sk\s) -- ($(sk\s) + (1cm,0)$) node[right] 
                   (sl\s) {\pictext{$\dots$}};

      }
      \path (sl1) -- node (eq1) {\pictext{$=_{\challenge{}}$}} (sl2);
      \draw[->] (eq1.north) -- ($(eq1.north) + (0,0.2)$);
      \draw[->] (eq1.south) -- ($(eq1.south) + (0,-0.2)$);

      \path (sl2) -- node (eq2) {\pictext{$=_{\challenge{}}$}} (sl3);
      \draw[->] (eq2.north) -- ($(eq2.north) + (0,0.2)$);
      \draw[->] (eq2.south) -- ($(eq2.south) + (0,-0.2)$);

      \path (sl3) -- node (eq3) {\pictext{$=_{\challenge{}}$}} (sln);
      \draw[->] (eq3.north) -- ($(eq3.north) + (0,0.4)$);
      \draw[->] (eq3.south) -- ($(eq3.south) + (0,-0.4)$);

      \draw[draw=white] (s03) -- coordinate[midway] node {\pictext{$\vdots$}} 
                        ($(s0n) + (-0,0.5cm)$);

      \node[below,yshift=-0.75cm] (label1) at (s2n) 
        {(b) Traces in the soundness QHP.};

      \node[left of=label1,xshift=-5cm,text width=5cm] {(a) Trace enumeration predicates.};

      \node[ellipse,draw,fill=gray!20, 
            minimum width=1.5cm, minimum height=3.5cm] (voval) at (-5,-2) 
            {$\valid{\enum{}, \numrnds{}}{}$};
      \draw[decorate,decoration={brace,amplitude=2mm,mirror}] 
        (t1.west) -- 
        node[xshift=-0.15cm] (mid1) {}
        %{\smalltxt{$\tren{\enum{}_1, \trc{0}, \trc{1}}$}} 
        (t2.west);
      \draw[decorate,decoration={brace,amplitude=2mm,mirror}] 
        ($(t1.west) + (-0.45,0)$) -- 
        node[xshift=-0.15cm] (mid2) {}
        %{\smalltxt{$\tren{\enum{}_1, \trc{0}, \trc{1}}$}} 
        ($(t3.west) + (-0.45,0)$);
      \draw[decorate,decoration={brace,amplitude=2mm,mirror}] 
        ($(t1.west) + (-1.0,0)$) -- 
        node[xshift=-0.15cm] (mid3) {}
        %{\smalltxt{$\tren{\enum{}_1, \trc{0}, \trc{1}}$}} 
        ($(tn.west) + (-1.0,0)$);
      \node[circle,fill,inner sep=2pt] (sat1) at ($(voval) + (-0.4,1.25)$) {};
      \node[below of=sat1,node distance=0.3cm] {\smalltxt{$\enum{}_1$}};
      \node[circle,fill,inner sep=2pt] (sat2) at ($(voval) + (0.4,0.5)$) {};
      \node[left of=sat2,node distance=0.3cm] {\smalltxt{$\enum{}_2$}};
      \node[circle,fill,inner sep=2pt] (sat3) at ($(voval) + (0,-0.75)$) {};
      \node[below of=sat3,node distance=0.3cm] {\smalltxt{$\enum{}_{\cntv{}}$}};
      \draw[-] ($(sat1.west) + (0.25,0)$) -- 
               node[midway, above, sloped] 
               {$\tren{\enum{}_1, \trc{0}, \trc{1}}$} 
               (mid1);
      \draw[-] ($(sat2.west) + (0.25,0)$) -- 
               node[midway, above, sloped] 
               {$\tren{\enum{}_2, \trc{0}, \trc{2}}$} 
               (mid2);
      \draw[-] ($(sat3.west) + (0.25,0)$) -- 
               node[midway, above, sloped] 
               {$\tren{\enum{}_{\cntv{}}, \trc{0}, \trc{\cntv{}}}$} 
               (mid3);
    \end{tikzpicture}
  \end{center}
    \caption{Using trace enumeration predicates to verify the soundness QHP.}
    \label{fig:tep-qhp}
\end{figure}

  The QHP requires that for every trace in which a cheating prover succeeds in tricking the verifier for a given trace of challenges, there are $2^\numrnds{} - 1$ other traces with the same challenges in which the prover's cheating is detected.
  %This is captured in the QHP by the lower-bound of $2^\numrnds{} - 1$ for the number of traces in which cheating is detected for any given successful instance of cheating.
  Even though soundness is a probabilistic property over the distribution of the system's traces, it can be reduced to counting (and thus specified as a QHP) because each execution trace is sampled uniformly from a finite set.
  Therefore, if the QHP is satisfied, Peggy's probability of successful cheating is upper-bounded by $2^{-\numrnds{}}$.

%\rohit{I am not too sure about this, but doesnt ZKP allow arbitrary strategy by the attacker, through the use of a simulator that does not have access to the secret? Maybe a one line clarification is needed.} 

\begin{comment}
\subsection{Challenges in Verifying QHPs}
This simple example demonstrates some of the challenges in the verification of QHPs.
First, note $\numrnds{}$ is unbounded and so our model of the protocol is not a finite state transition system.
Consequently, the technique of Finkbeiner et al.~\cite{mcqhyper-2018} is not applicable.
Further, the trace length is also unbounded.
While a number of efforts (e.g.~\cite{smtcnt-17, zhou2018static}) have studied model counting-based approaches to similar problems, these techniques will only work for bounded trace lengths and are not relevant to our scenario. 
Finally, the count itself ($2^{\numrnds{}} - 1$) is unbounded and symbolic.
This symbolic representation of the count is important because it allows us to capture and verify the fact that cheating becomes exponentially harder with each additional round of the protocol.
\end{comment}

\subsection{Solution Outline}
\ifthenelse{\boolean{arxiv}}{
\begin{figure}[!h!tb]
    \begin{center}
        \begin{tikzpicture}
            \tikzstyle{blocknode}=[ draw, rectangle, align=center, text width=6.0cm, minimum height=0.7cm, font=\footnotesize\sffamily, inner sep=0.5ex, fill=blue!05, 
            rounded corners];
            \tikzstyle{label}=[draw, rectangle, align=center, rounded corners, font=\small\bf, inner sep=1ex, fill=cyan!30, 
                                minimum width=3.2cm, minimum height=0.6cm];
            \tikzstyle{textnode}=[text width=3cm, font=\footnotesize\it];

            \tikzstyle{arrow}=[very thick,->,>=stealth];
            \tikzstyle{arrowtextu}=[text width=2cm, above, align=center]
            \tikzstyle{arrowtextd}=[text width=2cm, below, align=center, font=\footnotesize\it]
            
            % the center chain going right
            \begin{scope}[node distance=0.6cm, start chain=1 going below, every join/.style=arrow, xshift=-2cm,]
                \coordinate[on chain=1] (tc);
                \node[blocknode, on chain=1] (n2)
                    {Construct Transition System Model};
                \node[blocknode, join, on chain=1] (n3)
                    {Construct Enumeration: $\valid{}{}$, $\tren{}$};
                \coordinate[on chain=1] (cc);
                \node[blocknode, on chain=1, yshift=-1.6cm, fill=yellow!30] (n7)
                    {Counting};
            \end{scope}

            % the branch above. For injective trace enumeration
            \begin{scope}[start chain=2 going below,  every join/.style=arrow,]
                \node[blocknode, join, on chain=2, below of=cc, yshift=-0.1cm, 
                      xshift=-2.3cm, text width=3.5cm] (n5u)
                    {Injective Enumeration};
            \end{scope}

            % the branch below. For surjective trace enumeration.
            \begin{scope}[start chain=3 going below,  every join/.style=arrow]
                \node[blocknode, text width=3.5cm, join, on chain=3, below of=cc, 
                      yshift=-0.1cm, xshift=2.3cm] (n5d)
                    {Surjective Enumeration};
            \end{scope}
            \draw[->,style=arrow] (n3) -- 
                node[yshift=-0.1cm, above,font=\scriptsize\sffamily,rotate=45] 
                {$\triangleleft \in \{ \geq, = \}$} 
                (n5u);
            \draw[->,style=arrow] (n3) -- 
                node[yshift=-0.1cm, above,font=\scriptsize\sffamily,rotate=315] 
                {$\triangleleft \in \{ \leq, = \}$} 
                (n5d);
            \draw[->,style=arrow] (n5u) -- (n7);
            \draw[->,style=arrow] (n5d) -- (n7);

            % references to definitions and propositions
            \node[textnode, right=0.2cm of n2]{Definition~\ref{defn:symb-TS}};
            \node[textnode, right=0.2cm of n3]{Definitions~\ref{defn:injective} and \ref{defn:surjective}};
            \node[textnode, below=0.0cm of n7, text badly centered]
                {Section~\ref{sec:counting}};
            \node[textnode, below=0.0cm of n5u, xshift=-0.95cm]{
                    Properties~\ref{eqn:inj-te-1} and \ref{eqn:inj-te-3}};
            \node[textnode, below=0.0cm of n5d, xshift=1.45cm]{
                    Properties~\ref{eqn:surj-te-1} and \ref{eqn:surj-te-2}};
        \end{tikzpicture}
    \end{center}
    \caption{Overview of the complete verification methodology.}
    \label{fig:architecture}
\end{figure}
}
{}
To prove a QHP of the form $\qformula{}$, we construct a \emph{trace enumeration predicate} $\valid{\enum{}, Z}{}$ and show an injective/bijective mapping from assignments to $\enum{}$ in $\valid{\enum{}, Z}{}$ and traces of the system.
This allows us to prove $\qformulavalid{\triangleleft}$. 
This part of the proof relies on the notion of a trace enumeration relation (\S~\ref{sec:enum}).
In the next step, we show that ${\countsat{\valid{Y, Z}{}{}}{Y}} \triangleleft N(Z)$ using the inference rules presented in \S~\ref{sec:counting}.
\ifthenelse{\boolean{arxiv}}{
Figure \ref{fig:architecture} shows an overview of the complete methodology and a roadmap for the rest of the paper. 
}{}
We now describe these steps in the context of the motivating example.

\paragraph*{Verification of Soundness for the Z-K Hats Puzzle:}
%In order to prove Property~\ref{eqn:motiv-ex}, we need to show that for every trace of the system, captured by the quantified trace variable $\trv{0}$, there exist $2^\numrnds{}-1$ traces, captured by the quantified trace variable $\trv{1}$, satisfying the conditions described informally in \S~\ref{sec:ex-prop-desc}. 
Property~\ref{eqn:motiv-ex} is illustrated in Figure~\ref{fig:tep-qhp}(b). 
%All the traces shown in the figure have the same number of rounds and identical challenges.
$\trc{0}$ is a trace where the Z-K proof succeeds, while the proof fails for the set of traces $\tracesetcnt{} = \{ \trc{1}, \trc{2}, \dots, \trc{\cntv{}} \}$. The red states show the particular step of the proof in which an incorrect response is given by the prover, and each of these steps as well as their associated prover responses are pairwise different.
The QHP is satisfied if $\sizeof{\tracesetcnt{}} \geq 2^\mathtt{R} - 1$ for every $\trc{0} \in \traces{M}$, where $\mathtt{R} = \valueof{\numrnds{}}{\trcij{0}{0}}$.
% $\tracesetcnt{} = \{ \trc{1}, \trc{2}, \dots, \trc{\cntv{}} \}$.

The first step in our methodology is to construct a parameterized relation, called a trace enumeration relation, $\tren{\enum{}, \trc{0}, \trc{1}}$. This relates $\trc{0}$ to each trace in the set $\tracesetcnt{}$ and is parameterized by $\enum{}$.
%\rohit{For what purpose are you introducing these relations U and V; some intuition is needed}
For every value of the parameter $\enum{}$, $\tren{}$ relates a trace in which the proof succeeds ($\trc{0}$) to a trace in which the proof fails ($\trc{1}$).
For every trace $\trc{0}$ in which the proof succeeds, the set $\{ \trc{1} ~|~ \exists \enum{}.~\tren{\enum{}, \trc{0}, \trc{1}} \}$ corresponds to the set of traces with the same challenges and the same number of rounds, but with failed proofs of knowledge. 
%Further, $\tren{}$ should be such that every pair of traces that satisfies $\tren{}$ also satisfies $\psi$. 
Note this is a subset of $\tracesetcnt{}$.

Next, we construct a predicate $\valid{\enum{}, \numrnds{}}{}$ which defines valid assignments to $\enum{}$ for a particular value of $\numrnds{}$.
For a particular $\numrnds{}$, consider the set: $\{ \sigma(\enum{}) ~|~ \sigma \models \valid{\enum{}, \numrnds{}}{} \}$.
Suppose we are able to show that the relation $\tren{}$ is injective in $\enum{}$ and $\trc{0}$ for assignments to $\enum{}$ drawn from this set, then 
%For a particular value of $\numrnds{}$, the set $\{ \enum~|~ \enum{} \models \valid{\enum{}, \numrnds{}}{} \}$ is the set of parameters over which the mapping $\tren{}$ is injective. 
%This means 
we can lower-bound the size of $\tracesetcnt{}$ by the size of this set. 
In other words, we have reduced the problem of trace counting to the problem of counting assignments to $\valid{\enum{}, \numrnds{}}{}$.

%$\{ \enum~|~ \enum{} \models \valid{\enum{}, \numrnds{}}{} \}$.
Precisely stated, using $\valid{}{}$ and $\tren{}$, we show the following.

\begin{enumerate}
    \item For every trace $\trc{0}$, and every assignment $\enum{}_i$ satisfying $\valid{\enum{}_i, \valueof{\numrnds{}}{\trcij{0}{0}}}{}$, there exists a corresponding trace $\trc{i}$ 
    %that belongs to $\tracesetcnt{}$ and is related to $\trc{0}$ in the relation $\tren{}$; i.e. 
    that satisfies both $\tren{\enum{}_i, \trc{0}, \trc{i}}$ and $\psi(\trc{0}, \trc{i})$. %\rohit{Shouldn't $\psi$($\trc{0}$,$\trc{i}$) also be satisfied?}
    (Note $\valueof{\numrnds{}}{\trcij{0}{0}}$ refers to the valuation of $\numrnds{}$ in the initial state of $\trc{0}$.)
    \item Given two different satisfying assignments to $\valid{}{}$ for a particular value of $\numrnds{}$, say $\enum{}_j$ and $\enum{}_k$, the corresponding traces $\trc{j}$ and $\trc{k}$ 
    %in $\tracesetcnt{}$ 
    are guaranteed to have different prover responses; in other words, the traces satisfy $\delta(\trc{j}, \trc{k})$. 
        %\rohit{Shouldn't $\delta$($\trc{j}$,$\trc{k}$) also be satisfied?}
\end{enumerate}

The above two properties, illustrated in Figure~\ref{fig:tep-qhp}(a), imply there is an injective mapping from satisfying assignments of $\valid{\enum{}, \numrnds{}}{}$ to traces in $\tracesetcnt{}$.
Therefore, the number of traces in $\tracesetcnt{}$ can be lower bounded by the number of satisfying assignments to $\enum{}$ in $\valid{\enum{}, \numrnds{}}{}$, i.e. $\countsat{\valid{\enum{}, \numrnds{}}{}}{\enum{}}$.
We have reduced the difficult problem of counting traces into a {slightly easier} problem of counting satisfying assignments to a $\folt$ formula.
%This idea is formalized via the notion of injective and surjective trace enumeration relations in Section~\ref{sec:enum}.

The final step is to bound $\countsat{\valid{\enum{}, \numrnds{}}{}}{\enum{}}$.
For example, one well-known idea from enumerative combinatorics is that if a set $A$ is the union of disjoint sets $B$ and $C$, then $\sizeof{A} = \sizeof{B} + \sizeof{C}$.
Translated to model counting, the above can be written as $\countsat{F(X, Y)} {X}= \countsat{G(X, Y)}{X} + \countsat{H(X, Y)}{X}$ if $F(X, Y) \myiff G(X, Y) \lor H(X, Y)$ is valid and $G(X, Y) \land H(X, Y)$ is \unsat{}.{\footnote{We note there is an implied universal quantifier here. To be precise, we must write $\forall Y.~\countsat{F(X, Y)}{X}= \countsat{G(X, Y)}{X} + \countsat{H(X, Y)}{X}$.}}{}
We present a set of inference rules in Section~\ref{sec:counting} that build on this and related ideas.
These inference rules allow us derive a machine-checked proof of the bound $\countsat{\valid{\enum{}, \numrnds{}}{}}{\enum{}} \geq 2^{\numrnds{}} - 1$, thus completing the proof of Property~\ref{eqn:motiv-ex} for the Z-K hats puzzle.

\section{Overview of Quantitative Hyperproperties}
\label{sec:prelim}

%Quantitative hyperproperties extend hyperproperties by stipulating the existence of a certain number of \emph{different} traces satisfying specific criteria.
This section introduces a logic for the specification of quantitative hyperproperties over symbolic transition systems.
We present satisfaction semantics for this logic and then discuss its  applications in security verification.

\begin{figure}[htbp]
%\begin{tcolorbox}
\setlength{\grammarparsep}{4pt plus 1pt minus 1pt} % increase separation between rules
\setlength{\grammarindent}{5em} % increase separation between LHS/RHS 
\begin{mdframed}[innerleftmargin=1cm]
\begin{grammar}
  \let\syntleft\relax
  \let\syntright\relax
    <$\psi$> ::= $\forall \trv{}.~\psi$ | 
                 $\countq{\trv{}}{\noteq}{\psi}{\triangleleft}{N(Z)}$ | $\varphi$

    <$\varphi$> ::= 
      $\predsub{\trv{1},\trv{2},\dots,\trv{k}}$ 
      |
    $\lnot \varphi$ | $\varphi \lor \varphi$ |
    $\ltlu{\varphi}{\varphi}$ | $\ltlx{\varphi}$

    <$\triangleleft$> ::= $\leq$ | $=$ | $\geq$
\end{grammar}
\end{mdframed}
\caption{Grammar of Quantitative HyperLTL.}
\label{fig:qhltl}
\ifthenelse{\NOT{\boolean{arxiv}}}{
\vspace{-0.5in} % HACK (pramod)
}{}
\end{figure}

\subsection{Quantitative Hyperproperties}
%In this subsection, we introduce a specification language for expressing quantitative hyperproperties, formalize its semantics, and then given several examples of quantitative hyperproperties in the context of security verification.

Figure~\ref{fig:qhltl} shows the syntax of Quantitative HyperLTL, our extension of HyperLTL~\cite{hyperltl-15} that allows specification of quantitative hyperproperties over symbolic transition systems.
There are two noteworthy differences from the presentation of HyperLTL in~\cite{hyperltl-15}.
The first is the predicate $\predsub{\trv{1}, \trv{2}, \dots, \trv{k}}$.
This refers to a $k$-ary state predicate $\predonly{}$ that is applied to the first element of each trace in the subscript. 
These are analogous to atomic propositions in presentations that use Kripke structures and are defined as $k$-ary state predicates to capture relational properties over traces of the transition system. 
For example, consider the predicate $\predonly(\stat{}_0, \stat{}_1) \doteq (\mathit{input}(\stat{}_0) = \mathit{input}(\stat{}_1))$.
Given this definition, a system $\mname{}$ with exactly two traces $\traces{\mname{}} = \{ \trc{1}, \trc{2} \}$ satisfies the HyperLTL formula $\forall \trv{1}, \trv{2}.~\predsub{\trv{1}, \trv{2}}$ iff $\mathit{input}(\trci{1}{0}) = \mathit{input}(\trci{2}{0})$.
This hyperproperty requires that the input in the initial state of the system be deterministically initialized.

The second difference is the new \emph{counting quantifier}: $\countq{\trv{}}{\noteq}{\psi}{\triangleleft}{N(Z)}$.\footnote{A counting quantifier over Kripke structures was introduced by Finkbeiner et al.~\cite{mcqhyper-2018}. Our definition is slightly different and a detailed comparison is deferred to Section~\ref{sec:related}.} 
$\noteq$ is an unquantified HyperLTL formula over two ``fresh'' trace variables $\trv{j}$ and $\trv{k}$ that encodes when two traces are considered different.\ifthenelse{\boolean{arxiv}}{\footnote{We say that an unquantified HyperLTL formula is over the set of trace variables $V$ iff every variable that appears in the subscript of some predicate $\predonly$ in the formula belongs the set $V$.}}{}
%unquantified HyperLTL formula 
$\psi$ is another (possibly-quantified) HyperLTL formula.
The operator $\triangleleft$ can be $\leq$, $=$, or $\geq$.
$N(Z)$ is an integer-sorted term in $\folt$ over the variables in the set $Z$, $Z \subset X$ where $X$ is the set of state variables of the transition system under consideration.
$Z$ typically refers to the subset of the state variables that define the parameters of the transition system; e.g. $Z = \{ \numrnds{} \}$ for the Z-K proof transition system in Figure~\ref{fig:TS-hat}, the number of blocks in a model of Path ORAM, the size of an array, etc. 
Typically, the variables in the set $Z$ do not change after initialization.
Informally stated, the counting quantifier is satisfied if a maximally large set $\tracesetcnt \subseteq \traceset$ satisfying the following two conditions:
(i) each of the traces in $\tracesetcnt$ are pairwise different as defined by satisfaction of $\noteq$, and
% by satisfaction of the formula $\noteq$, and
(ii) every trace in this set satisfies the HyperLTL formula $\psi$,
% and (iii) $\sizeof{\tracesetcnt{}} ~\triangleleft~N(Z)$ is valid.
has cardinality $\triangleleft ~\mathit{count}$ where
$\mathit{count}$ is the valuation of $N(Z)$ in the initial state of every trace in $\tracesetcnt{}$.

%\rohit{Can we state that $Z$ must consist of constant parameters? In the intro, we say it can comprise state variables. In general, we can call out the fact that our approach helps verify parametric systems. We have this in the footnote. I'm going to move it to the text.}

%Note that Quantitative HyperLTL does not have the existential quantifier $\exists \trv{}.~\psi$. 
%This is because $\exists \trv{}.\psi$ is equivalent to $\countq{\trv{}}{\ltlf{(\neq_{\trv{j}, \trv{k}}})}{\psi}{\geq}{1}$ where $\neq$ refers to inequality of states.
The remaining operators are standard, so we do not discuss them further and instead provide formal satisfaction semantics.

\subsubsection{Satisfaction Semantics of Quantitative HyperLTL}

\begin{figure}[!t]
    \begin{mdframed}[innerrightmargin=0pt, innerleftmargin=3pt,innertopmargin=0pt,innerbottommargin=3pt]
    \begin{flalign*}
        & \Pi \models_{\traceset{}} 
      \forall \trv{}.~\psi & 
      \text{iff } & 
        \text{for all } \trc{} \in \traceset{}: 
        \Pi[\pi \mapsto \trc{}] \models_{\traceset{}} \psi  & \\
        & \Pi \models_{\traceset{}} 
        \countq{\pi}{\noteq}{\psi}{\triangleleft}{N(Z)}  & 
      \text{iff } & 
        \sizeof{\tracesetcnt} = 0 \myimpl 0 \triangleleft N(Z) \text{ is valid, and } & \\
        & & &
        \sizeof{\tracesetcnt} > 0 \myimpl \forall \trc{} \in \tracesetcnt{}.~\sizeof{\tracesetcnt{}} \triangleleft \valueof{N(Z)}{\trcij{}{0}} \text{, where,} & \\
        & & & \tracesetcnt{} \subseteq \traceset{} \text{ is a maximally large set such that:} \\
        & & &  ~ \forall \trc{j}, \trc{k} \in \tracesetcnt{}.~\\
        & & &  ~~~~~~~~~\trc{j} \neq \trc{k} \myiff \{ \trv{j} \mapsto \trc{j}, \trv{k} \mapsto \trc{k} \} \models\;\noteq{}
             \\
        & & & \text{and, } \forall \trc{} \in \tracesetcnt{}.~ 
            \Pi[\trv{} \mapsto \trc{}] \models_{\traceset{}} \psi  \\
        & \Pi \models_{\traceset{}} 
        \predsub{\trv{1}, \dots,\trv{k}} & 
        \text{iff } & \predonly{}(\Pi(\trv{1})^{0}, \dots, \Pi(\trv{k})^{0}) \text{ is valid } 
      & \\
        & \Pi \models_{\traceset{}} 
      \lnot \psi & 
        \text{iff } & \Pi \not\models_{\traceset{}} \psi 
      & \\
        & \Pi \models_{\traceset{}} 
      \psi \lor \varphi & 
        \text{iff } & \Pi \models_{\traceset{}} \psi \text{ or } \Pi \models_{\traceset{}} \varphi 
      & \\
        & \Pi \models_{\traceset{}} 
        \ltlx{\varphi} & 
        \text{iff } & \Pi^{[1, \infty]} \models_{\traceset{}} \varphi  
      & \\
        & \Pi \models_{\traceset{}} 
        \ltlu{\varphi}{\psi} & 
        \text{iff } & \text{there exists } j \geq 0: \Pi^{[j, \infty]} \models_{\traceset{}} \psi  
      & \\
        & & & \text{ and for all } 0 \leq i < j:  \Pi^{[i, \infty]} \models_{\traceset{}} \varphi
      &  
    \end{flalign*}
    \end{mdframed}
\caption{Satisfaction semantics for Quantitative HyperLTL formulas over symbolic transition systems.}
\label{fig:sem-hyper-ltl}
  \ifthenelse{\NOT{\boolean{arxiv}}}{\vspace{-0.25in}}{}
\end{figure}
The validity judgement of a property $\varphi$ by a set of traces $\traceset{}$ is defined with respect to a trace assignment $\Pi : \varset \to \traceset{}$. 
Here, $\varset$ is the set of trace variables.
We use $\pi, \pi_1, \pi_2$ $, \dots$ to refer to trace variables.\footnote{Note the distinction between trace variables denoted by $\trv{1}, \trv{2}$, etc. and traces which are denoted by $\trc{1}, \trc{2}$, etc.}
The partial function $\Pi$ is a mapping from trace variables to traces.
We use the notation $\Pi[\pi \mapsto \trc{}]$ to refer to a trace assignment that is identical to $\Pi$ except for the trace variable $\trv{}$ which now maps to the trace $\trc{}$.
We write $\Pi \models_{\traceset{}} \psi$ if the set of traces $\traceset{}$ satisfies the property $\psi$ under the trace assignment $\Pi$.
We will drop the subscript $\traceset{}$ from $\models_{\traceset{}}$ if it is clear from the context or irrelevant.
The notation $\Pi^{[i, \infty]}$ is an abbreviation for the new trace assignment obtained by taking the suffix starting from index $i$ of every trace in $\Pi$: $\Pi^{[i, \infty]}(\pi) = \Pi(\pi)^{[i,\infty]}$ for every trace $\pi \in \dom{\Pi}$ where $\dom{\Pi}$ is the domain of $\Pi$. 
We write $\Pi \not\models_{\traceset{}} \psi$ when $\Pi \models_{\traceset{}} \psi$ is not satisfied.
Satisfaction rules for HyperLTL formulas are shown in Figure~\ref{fig:sem-hyper-ltl}.
\begin{definition}[Quantitative HyperLTL Satisfaction]
    We say that the transition system $\mname{}$ satisfies the property $\psi$, denoted by $\mname{} \models \psi$ if the empty trace assignment $\emptyset$ satisfies formula $\psi$ for the set of traces $\traces{M}$, that is $\emptyset \models_{\traces{M}} \psi$.
\end{definition}

\paragraph{Additional Operators:}
The above showed the minimal set of operators required in Quantitative HyperLTL.
The rest of this paper will use the other standard operators such as $\land$ (conjunction), $\myimpl$ (implication), $\ltlf$ (future/eventually) and $\ltlg$ (globally/always) which can be defined in terms of the operators in Figure~\ref{fig:qhltl}.

\paragraph{Well-defined Formulas:} In order for the semantics of Quantified HyperLTL to be meaningful, we need certain semantic restrictions on the structure of QHPs. 
\begin{definition}[Well-defined QHPs]
    An instance of a counting quantifier $\countqinst{}$ is said to be well-defined if:
    \begin{enumerate}
        \item $\lnot\noteq$ is an equivalence relation over the set of all traces $\traces{}$, and
        \item In every set of the traces $\tracesetcnt{}$ captured by the counting quantifier in the semantics shown in Figure~\ref{fig:sem-hyper-ltl}, the term $N(Z)$ has the same valuation for all initial states: $\forall \trc{i}, \trc{j} \in \tracesetcnt{}.~\valueof{N(Z)}{\trcij{i}{0}} = \valueof{N(Z)}{\trcij{j}{0}}$.
    \end{enumerate}

    A Quantified HyperLTL formula is said to be well-defined if every instance of a counting quantifier in the formula is well-defined.
\end{definition}

\begin{example}[Well-defined QHPs]\label{ex:well-defined-qhps}
  The QHPs presented in the rest of this paper are all well-defined, so here we give an example of a QHP that is \emph{not} well-defined. Consider this variant of Property~\ref{eqn:motiv-ex}: $\forall \trv{0}.\countq{\trv{1}}{\true}{\ltlg{(\psi_{\trv{0}, \trv{1}})}}{\geq}{2^\numrnds - 1}$. 
  This is not a well-defined QHP because $\noteq{}$ in the counting quantifier is simply $\true$, and its negation is not an equivalence relation over the set of traces.
\end{example}

Note that condition (1) in the definition above affects $\noteq$ while condition (2) places a restriction on $\varphi$. 
The former condition prevents double-counting of traces, while the latter ensures that the trace count is unambiguous. 

%While both conditions can be verified as hyperproperties involving $\noteq$ and $\varphi$, 
The properties in our experiments require only syntactic checks to verify well-definedness. 
Specifically, $\noteq$ is always of the form $\ltlf{(\predsub{\trv{j}, \trv{k}})}$ where $\predonly{}$ is of the form $\predonly(\stat{}_1, \stat{}_2) \doteq f(\stat{}_1) \neq f(\stat{}_2)$.
The negation of this is obviously an equivalence relation over the set of all traces.
Secondly, our QHPs are of the form $\qformula$ where $\varphi$ enforces equality of the variables in $Z$ between the traces $\trv{0}$ and $\trv{1}$.
These two features guarantee well-definedness.
In the rest of this paper, we only consider well-defined QHPs.

\begin{comment}
\begin{align*}
    & \varphi \land \psi & \equiv~& \lnot(\lnot \varphi \lor \lnot \psi) \\
    & \varphi \myimpl \psi & \equiv~& \lnot \varphi \lor \psi \\
    & \ltlf{\varphi} & \equiv~& \ltlu{\true}{\varphi} \\
    & \ltlg{\varphi} & \equiv~& \lnot(\ltlf{\lnot \varphi}) 
\end{align*}
\end{comment}
\subsection{Applications of QHPs in Security Specification}
\paragraph{Deniability:}
\label{ex:deniability}
Our first example of a quantitative hyperproperty is deniability.
Suppose $\obs{\stat{}}$ is a term that corresponds to the adversary observable part of the state $\stat{}$, while $\secret{\stat{}}$ corresponds to the secret component of the state $\stat{}$.
    Deniability is satisfied when every trace of adversary observations can be generated by at least $N(Z)$ different secrets.
For this, we define $\delta(\stat{}_1, \stat{}_2) \doteq \secret{\stat{}_1} \neq \secret{\stat{}_2}$ and $\approx^O(\stat{}_1, \stat{}_2) \doteq \obs{\stat{}_1} = \obs{\stat{}_2}$.

\begin{equation*}
    \forall \trv{0}.\countq{\trv{1}}{\ltlf{(\delta_{\trv{j}, \trv{k}})}}{\ltlg{(\approx^O_{\trv{0}, \trv{1}})}}{\geq}{N(Z)}
\end{equation*}

  \begin{figure}[htbp]
    \centering
      \begin{center}
    \begin{tikzpicture}
      \newcommand{\pictext}[1]{\scriptsize{#1}}
      \tikzstyle{state}=[circle,minimum width=0.5cm,inner sep=0,
                         draw,node distance=1.25cm,fill=green!05];
      \foreach \s/\r/\lb/\pos/\col in {
          1/0/1/0/green!05, 
          2/-1.2/2/2/blue!05, 
          3/-2.4/3/k/red!10, 
          n/-4.0/\cntv{}/1/yellow!20}
      {
          \node[state,fill=\col] (s0\s) at (0,\r) 
             {\pictext{$\trcij{\lb}{0}$}};
          \node[left of=s0\s,node distance=0.7cm] (t\s) {\pictext{$\trc{\lb}$}};
          \node[state, fill=\col, right of=s0\s] (s1\s)
             {\pictext{$\trcij{\lb}{1}$}};
          \node[state, fill=\col, right of=s1\s] (s2\s)
             {\pictext{$\trcij{\lb}{2}$}};
          \node[state, fill=\col, right of=s2\s] (s3\s)
             {\pictext{$\trcij{\lb}{3}$}};
          \node[state, fill=\col, right of=s3\s, node distance=2cm] (sk\s)
             {\pictext{$\trcij{\lb}{k}$}};
          
          \node[state,fill=blue!20] at (s\pos\s) {\pictext{$\trcij{\lb}{\pos}$}};

         \draw[->] (s0\s) -- (s1\s);
         \draw[->] (s1\s) -- (s2\s);
         \draw[->] (s2\s) -- (s3\s);
         \draw[->] (s3\s) -- coordinate[midway](sj\s) 
                   node[fill=white] {\pictext{$\dots$}} (sk\s);
         \draw[->] (sk\s) -- ($(sk\s) + (1cm,0)$) 
            node[right] (sdots\s) {\pictext{$\dots$}};

      }

     \foreach \p/\q in {1/2, 2/3, 3/n} {
     \draw[<->] (s0\p) -- 
          node[fill=white]
          {\pictext{\textcolor{red}{$\approx^O$}}} 
          (s0\q);
     \draw[<->] (s1\p) -- 
          node[fill=white]
          {\pictext{\textcolor{red}{$\approx^O$}}} 
          (s1\q);
     \draw[<->] (s2\p) -- 
          node[fill=white]
          {\pictext{\textcolor{red}{$\approx^O$}}} 
          (s2\q);
     \draw[<->] (s3\p) -- 
          node[fill=white]
          {\pictext{\textcolor{red}{$\approx^O$}}} 
          (s3\q);
     \draw[<->] (sk\p) -- 
          node[fill=white]
          {\pictext{\textcolor{red}{$\approx^O$}}} 
          (sk\q);
     }
     \draw[draw=none] (t3) -- 
        node {\pictext{$\vdots$}} 
        ($(tn) + (0, 0.5)$);
    \end{tikzpicture}
  \end{center}
    \caption{Illustrating deniability.}
    \label{fig:deniability}
  \end{figure}

    Figure~\ref{fig:deniability} illustrates deniability.
It shows a set of traces $\tracesetcnt{} := \{ \trcj{1}, \trcj{2}, \ldots, \trcj{\cntv} \}$; the circles represent the states in each trace and the secret values are shown by color of the circle. 
      For these traces, every pair of corresponding states have the same observations: represented by $\approx^{O}$, and every distinct pair of traces differ in the secrets. 
      Deniability is satisfied if $\sizeof{\tracesetcnt{}} \geq N(Z)$.
    Satisfaction implies that every trace of adversary observations has at least $N(Z)$ counterparts with identical observations but different values of $\secret{\stat{}}$. 
    If we can show in a system satisfying deniability that each trace of secrets is equiprobable and $N(Z)$ grows exponentially in some parameters of the system, then we can conclude that the system satisfies computational indistinguishability.
Deniability can capture probabilistic notions of confidentiality, such as confidentiality of Path ORAM.
    %In other words, there are $N(Z)$ traces with the same observations but different secrets.
    %If $N(Z)$ grows exponentially as some parameter of the system, it ensures computational indistinguishability.

\begin{comment}
    In our experiments, we show that Path ORAM satisfies deniability for a bound of $N(\numblks{}) \doteq (\numblks{}-1)!$ where $\numblks{}$ is the number of blocks stored in the server in Path ORAM.\footnote{$(\numblks{}-1)!$ is not a tight bound and can likely be improved.} 
    Note this specification and associated proof are parametric in the size of the Path ORAM tree. We are not showing deniability for a specific sized Path ORAM (e.g., an ORAM with 256 blocks), but instead for all Path ORAMs.
\end{comment}
    
%\end{example}

\paragraph{Soundness:}
\label{ex:soundness}
    While deniability encodes a form of confidentiality, soundness is its dual in the context of integrity. 
    One example of soundness was given in \S~\ref{sec:ex-prop-desc} for the Z-K hats puzzle.
    Soundness is generally applicable to protocols that offer probabilistic integrity guarantees. 
    For instance, many interactive challenge-response protocols which consist of repeated rounds such that if the prover succeeds in all rounds, the verifier can be convinced with high probability that the prover is not cheating.
    This can be viewed as a QHP stating that for every trace in which a dishonest prover tricks a verifier into accepting an invalid proof, there are at least $N(Z)$ other traces with different prover responses in which the cheating is detected. 
    As usual, we require that traces be uniformly sampled from a finite set in order to state soundness as a QHP.

    Soundness is stated as $\forall \trv{0}.\countq{\trv{1}}{\ltlf{(\delta_{\trv{j}, \trv{k}})}}{\ltlg{(\psi_{\trv{0}, \trv{1}})}}{\geq}{N(Z)}$.
    The relation $\delta$ is defined as two states having different prover responses. 
    %$\delta(\stat{}_1, \stat{}_2) \doteq \mathit{response}(\stat{}_1) \neq \mathit{response}(\stat{}_2)$.
    $\psi$ requires the challenge-response protocol to fail in $\trv{1}$ if it succeeded in $\trv{0}$ and also that the system parameters (the variables in $Z$) be identical between $\trv{0}$ and $\trv{1}$. 

\ifthenelse{\boolean{arxiv}}{
  \begin{figure}[htbp]
    \centering
      \begin{center}
    \begin{tikzpicture}
      \newcommand{\pictext}[1]{\footnotesize{#1}}
      \tikzstyle{state}=[circle,minimum width=0.5cm,inner sep=0.05cm,
                         draw,node distance=1.25cm,fill=green!05];
      \foreach \s/\r/\lb/\pos/\col in {1/0/1/0/green!05, 2/-1.6/2/2/blue!05, 3/-3.2/3/k/red!10, n/-5.1/\cntv{}/1/yellow!20}
      {
          \node[state,fill=\col] (s0\s) at (0,\r) 
             {\pictext{$\trcij{\s}{0}$}};
          \node[left of=s0\s,node distance=0.7cm] (t\s) {\pictext{$\trc{\lb}$}};
          \node[state, fill=\col, right of=s0\s] (s1\s)
             {\pictext{$\trcij{\s}{1}$}};
          \node[state, fill=\col, right of=s1\s] (s2\s)
             {\pictext{$\trcij{\s}{2}$}};
          \node[state, fill=\col, right of=s2\s] (s3\s)
             {\pictext{$\trcij{\s}{3}$}};
          \node[state, fill=\col, right of=s3\s, node distance=2cm] (sk\s)
             {\pictext{$\trcij{\s}{i}$}};
          
         \draw[->] (s0\s) -- (s1\s);
         \draw[->] (s1\s) -- (s2\s);
         \draw[->] (s2\s) -- (s3\s);
         \draw[->] (s3\s) -- coordinate[midway](sj\s) 
                   node[fill=white]  {\pictext{$\dots$}} (sk\s);
         \draw[->] (sk\s) -- ($(sk\s) + (1cm,0)$) 
                    node[right] (sdots\s) {\pictext{$\dots$}};

      }

     \foreach \p/\q in {1/2, 2/3, 3/n} {
     \draw[dashed,<->] (s0\p) -- 
          node[rotate=90,fill=white] {\pictext{\textcolor{red}{$\approx^I$}}} 
          (s0\q);
     \draw[dashed,<->] (s1\p) -- 
          node[rotate=90,fill=white] {\pictext{\textcolor{red}{$\approx^I$}}} 
          (s1\q);
     \draw[dashed,<->] (s2\p) -- 
          node[rotate=90,fill=white] {\pictext{\textcolor{red}{$\approx^I$}}} 
          (s2\q);
     \draw[dashed,<->] (s3\p) -- 
          node[rotate=90,fill=white] {\pictext{\textcolor{red}{$\approx^I$}}} 
          (s3\q);
     \draw[dashed,<->] (sk\p) -- 
          node[rotate=90,fill=white] {\pictext{\textcolor{red}{$\approx^I$}}} 
          (sk\q);
     }
     \draw[draw=none] (t3) -- 
        node {\pictext{$\vdots$}} 
        ($(tn) + (0, 0.5)$);
  \end{tikzpicture}
  \end{center}
      \caption{Illustrating quantitative non-interference.}
    \label{fig:qni}
  \end{figure}
\paragraph{Quantitative Non-interference:}
    In contrast to the above examples, which lower-bounded the number of traces, quantitative non-interference~\cite{yasuoka2014quantitative,  smith2009foundations} \emph{upper-bounds} the amount of information that an attacker can gain from any single trace of attacker-supplied inputs. It is shown in Figure~\ref{fig:qni} and stated as follows:
\begin{equation*}
    \forall \trv{0}.\countq{\trv{1}}{\ltlf{(\delta_{\trv{j}, \trv{k}})}}{\ltlg{(\approx^I_{\trv{0}, \trv{1}})}}{\leq}{N(Z)}
\end{equation*}

In the above, the condition $\approx^I$ encodes the fact that the traces $\trv{0}$ and $\trv{1}$ have the same set of attacker inputs: $\approx^I(\stat{}_1, \stat{}_{2}) \doteq \mathit{inp}(\stat{}_{1}) = \mathit{inp}(\stat{}_{2})$.
    The relation $\delta$ requires the states $\stat{}_1$ and $\stat{}_2$ have different attacker observations: $\delta(\stat{}_1, \stat{}_2) \doteq \mathit{obs}(\stat{}_1) \neq \mathit{obs}(\stat{}_2)$. 
    The property is counting the number of different attacker observable outputs for any given input.
    Assuming that attacker inputs are equiprobable, quantitative non-interference implies that the maximum information an attacker can learn from any single trace is $\lg(N(Z))$ bits.
}
{
}
\begin{comment}
\rohit{If you want to save space, maybe you can drop this example and just cite Finkbeiner paper}
\end{comment}

\begin{comment}
\begin{figure}[h]
    \centering
    \input{figures/qni}
    \caption{Illustrating quantitative non-interference.}
    \label{fig:qni}
\end{figure}

Figure~\ref{fig:qni} depicts quantitative non-interference. 
    It shows a set of traces $\tracesetcnt{} = \{ \trcj{1},\trcj{2}, \dots,  \trcj{\cntv{}} \}$. 
All traces have the same attacker inputs, as shown by the relational predicate $\approx^I$, but every pair of traces has different secrets (shown by different colors) and different attacker observations.
    If we prove that the number of different attacker observations $\cntv{} \leq N(Z)$, we can conclude that the information an attacker can learn is bounded by $\lg(N(Z))$ bits.
\end{comment}

\subsubsection*{Summarizing QHP Specification:}
These examples demonstrate that QHPs have important applications in security verification.
They capture probabilistic notions of both confidentiality and integrity.
In particular, the following form of QHPs consisting of a single quantifier alternation seems especially relevant for security verification: $\qformula{}$.
Each of the examples of quantitative hyperproperties discussed in the previous subsection -- deniability, soundness, \ifthenelse{\NOT{\boolean{arxiv}}}{as well as others like}{and} quantitative non-interference~\cite{yasuoka2014quantitative,  smith2009foundations} fit in this template.
Therefore, in the rest of this paper, we focus on developing scalable verification techniques for QHPs that follow this template.

\begin{comment}
It is important to note that the Quantitative HyperLTL formulas described in this paper are fundamentally \emph{more expressive} than HyperLTL.
    This is because we allow the QHP bound --- the term $N(Z)$ in Equation~\ref{eqn:std-qhp} --- to be a symbolic expression.
    This means $N(Z)$ could be unbounded in value, and therefore there does not exist any equivalent HyperLTL formula which can express this specification. If $N(Z)$ is bounded then we can in fact express the QHP as an equivalent HyperLTL formula at the cost of an exponential blowup in the size of the HyperLTL formula.
As we will demonstrate in the next section, verification of these QHPs is challenging as it involves simultaneously reasoning about an unbounded number of traces, which are themselves of infinite length.
\end{comment}

\section{Trace Enumerations}
\label{sec:enum}

This section introduces the notion of a trace enumeration, which is a technique that allows us to reduce the problem of counting traces to that of counting satisfiable assignments to a formula in $\folt$.

\subsection{Trace Enumeration Relations}
\label{sec:trel}

We now formalize injective trace enumerations which allows us to lower-bound the number of traces captured by a counting quantifier in a QHP.

\begin{definition}[Injective Trace Enumeration]
    \label{defn:injective}
    Let us consider a transition system  $\machorig{}$ and the relation $\tren{Y, \trc{1}, \trc{2}}$ where $Y$ is a set of variables disjoint from $X$, $\trc{1}$ and $\trc{2}$ are traces of this  transition system.
    Let $\qformulage{}$ be a QHP where $Z \subset X$.
    Suppose $\valid{Y, Z}{}$ is a predicate over the variables in $Y$ and $Z$.
    We say that the pair $\valid{Y, Z}{}$ and $\tren{Y, \trc{1}, \trc{2}}$ form an injective trace enumeration of the system $\mname{}$ for the QHP $\qformulage{}$ iff the following conditions are satisfied:

    \begin{enumerate}
        \item For every trace $\trc{0}$ in $\traces{\mname{}}$ and every satisfying assignment $(\fixy, \fixz)$ for the predicate $\valid{Y, Z}{}$, there exists a trace $\trc{1} \in \traces{\mname{}}$ which is related to the trace $\trc{0}$ as per the relation $\tren{}$ via this same assignment to $Y$. 
            Further, the pair $\trc{0}$ and $\trc{1}$ satisfy the property $\varphi$ and the valuation of the variables in $Z$ in the initial state of $\trc{1}$ is equal to $\fixz$. 
        \begin{align}
            \label{eqn:inj-te-1}
            \forall &\trc{0} \in \traces{\mname{}}, {\fixy}, \fixz. 
                    ~\valid{{\fixy, \fixz}}{} \myimpl & \\
                    & \big(\exists \trc{1} \in \traces{\mname{}}.~ 
                    \tren{{\fixy}, \trc{0}, \trc{1}} 
                    \land \{ \trv{0} \mapsto \trc{0},
                              \trv{1} \mapsto  \trc{1} \}
                           \models \varphi \land
                    \valueof{Z}{\trcij{1}{0}} = \fixz 
                    \big) & \nonumber
        \end{align}
    \item Different assignments to the variables in $Y$ for the formula $\valid{Y, Z}{}$ enumerate different traces in $\tren{Y, \trc{0}, \trc{1}}$, where ``different'' means satisfaction of $\noteq$. %by satisfaction of $\noteq$.
        \begin{align}
            \label{eqn:inj-te-3}
            \forall & 
                \trc{0}, \trc{1}, \trc{2} \in \traces{\mname{}}, 
                    \fixy{}_1, \fixy{}_2, \fixz{}.~ & \\
            &~~\valid{\fixy{}_1, \fixz}{} \land
               \valid{\fixy{}_2, \fixz}{} \land
                \fixy{}_1 \neq \fixy{}_2  
               & \myimpl \nonumber \\ 
             &~~\tren{\fixy{}_1, \trc{0}, \trc{1}} \land 
                \tren{\fixy{}_2, \trc{0}, \trc{2}} \land
                \valueof{Z}{\trcij{1}{0}} = \fixz{} \land
                \valueof{Z}{\trcij{2}{0}} = \fixz{}
               & \myimpl \nonumber \\ 
            &~~\{ \trv{j} \mapsto \trc{1}, \trv{k} \mapsto \trc{2} \} 
                \models\,\noteq \nonumber 
            % &~~\{\trv{j} \mapsto \trc{1}, \trv{k} \mapsto \trc{2}\} 
            %         \models \noteq \nonumber
        \end{align}
    \end{enumerate}
\end{definition}

If $\valid{}{}$ and $\tren{}$ form an injective trace enumeration $\mname{}$ for the property $\qformulage{}$, then for every trace $\trc{0}$, there exist at least as many traces satisfying the counting quantifier as there are satisfying assignments to $Y$ in $\valid{Y, Z}{}$. 
This is made precise in the following lemma.

\begin{restatable}{lemma}{tclb}[Trace Count Lower-Bound]
    \label{lem:tclb}
    If $\valid{Y, Z}{}$ and $\tren{Y,\trc{1},\trc{2}}$ form an injective trace enumeration of the system $\mname{}$ for the QHP $\qformulage{}$ and if $\countsat{\valid{Y, Z}{}}{Y}$ is finite for all assignments to $Z$, then $\mname{} \models \forall \trv{0}.\countq{\trv{1}}{\noteq}{\varphi}{\geq}{\countsat{\valid{Y, Z}{}}{Y}}$.
\end{restatable}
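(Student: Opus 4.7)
The plan is to fix an arbitrary trace $\trc{0} \in \traces{\mname{}}$, extract the parameter valuation $\fixz \doteq \valueof{Z}{\trcij{0}{0}}$, and build a witness family of traces that forces the lower bound in the semantics of the counting quantifier in Figure~\ref{fig:sem-hyper-ltl}. Set $C \doteq \valueof{\countsat{\valid{Y,Z}{}}{Y}}{\fixz}$; this is finite by hypothesis. The case $C = 0$ follows immediately from the first clause of the counting-quantifier semantics, so from here on I focus on $C > 0$ and enumerate the $C$ distinct satisfying assignments $\fixy_1, \ldots, \fixy_C$ of $\valid{Y, \fixz}{}$.

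Next, I would apply condition (\ref{eqn:inj-te-1}) of Definition~\ref{defn:injective} once per $i$ to obtain traces $\trc{i} \in \traces{\mname{}}$ such that $\tren{\fixy_i, \trc{0}, \trc{i}}$ holds, $\{\trv{0} \mapsto \trc{0}, \trv{1} \mapsto \trc{i}\} \models \varphi$, and $\valueof{Z}{\trcij{i}{0}} = \fixz$. For each pair $i \neq j$, I would then invoke condition (\ref{eqn:inj-te-3})---which applies because $\valid{\fixy_i, \fixz}{}$ and $\valid{\fixy_j, \fixz}{}$ both hold and $\fixy_i \neq \fixy_j$---to conclude $\{\trv{j} \mapsto \trc{i}, \trv{k} \mapsto \trc{j}\} \models\,\noteq$. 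Well-definedness of the QHP makes $\lnot\noteq$ an equivalence relation, hence reflexive, so this already rules out $\trc{i} = \trc{j}$. The witness set $T \doteq \{\trc{1}, \ldots, \trc{C}\}$ therefore has cardinality exactly $C$.

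To close out, I would compare $T$ with the maximally large set $\tracesetcnt{}$ in the semantics under the trace assignment $\Pi = \{\trv{0} \mapsto \trc{0}\}$. The set $T$ is itself a valid candidate for $\tracesetcnt{}$: pairs of distinct traces in $T$ are exactly those related by $\noteq$ (forward direction from condition (\ref{eqn:inj-te-3}), reverse from reflexivity of $\lnot\noteq$), and every trace in $T$ satisfies $\varphi$ together with $\trc{0}$. Hence any maximal $\tracesetcnt{}$ must have $|\tracesetcnt{}| \geq |T| = C$. Well-definedness of the QHP combined with the fact that every $\trc{i} \in T$ has $Z$-valuation $\fixz$ then forces the common value of $\valueof{N(Z)}{\trcij{}{0}}$ across $\tracesetcnt{}$ to equal $C$, so the $\geq$-clause of the counting-quantifier semantics reduces to exactly $|\tracesetcnt{}| \geq C$, which we have established.

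I anticipate the main obstacle to be bookkeeping around the semantics rather than any deep argument: specifically, converting the purely formal $\noteq$-satisfaction from condition (\ref{eqn:inj-te-3}) into genuine distinctness of traces via reflexivity of $\lnot\noteq$, and then arguing that any maximal $\tracesetcnt{}$ can only exceed the size of the explicit witness set $T$ once we verify that $T$ itself meets the two candidate conditions imposed by the counting-quantifier semantics.
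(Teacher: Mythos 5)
Your proof is correct, but it follows a genuinely different route from the paper's. The paper first establishes an auxiliary Equivalence Class Characterization lemma (Lemma~\ref{lem:eqclass}: the size of the maximally large set $\tracesetcnt{}$ equals the number of equivalence classes of $\eqrel$, the relation $\lnot\noteq$, on the set of traces satisfying $\varphi$ with $\trc{0}$), and then argues by induction on $\countsat{\valid{Y,Z}{}}{Y}$: the inductive step peels off one satisfying assignment $\ysat$, applies the hypothesis to $\valid{Y,Z}{} \land Y \neq \ysat$, and uses Property~\ref{eqn:inj-te-3} to show the trace witnessing $\ysat$ occupies a fresh equivalence class. Your direct construction --- enumerate all $C$ satisfying assignments at once (legitimate precisely because finiteness is assumed), obtain a witness trace for each via Property~\ref{eqn:inj-te-1}, and use Property~\ref{eqn:inj-te-3} together with reflexivity of $\lnot\noteq$ to conclude the witnesses are pairwise distinct and pairwise $\noteq$-related --- collapses that induction into a single step and is, if anything, more transparent; your observation that reflexivity of $\lnot\noteq$ is what upgrades $\noteq$-satisfaction to genuine distinctness of traces is exactly the content the paper packages inside Lemma~\ref{lem:eqclass}. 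One shared point of looseness worth flagging: both arguments bound $\sizeof{\tracesetcnt{}}$ below by $C$ computed at $\fixz = \valueof{Z}{\trcij{0}{0}}$, whereas the semantics require $\sizeof{\tracesetcnt{}} \geq \valueof{N(Z)}{\trcij{}{0}}$ at the initial states of the traces \emph{in} the chosen maximal set, which need not contain $T$; your appeal to well-definedness to identify these two quantities implicitly uses the paper's informal convention that $\varphi$ forces $Z$-equality between $\trv{0}$ and $\trv{1}$. Since the paper's own proof makes the same silent identification, this is not a defect of your argument relative to the paper's.
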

%This lemma is proven in Appendix~\ref{app:proofs}.

\begin{comment}
\begin{proof}
    Refer to Appendix \ref{proof:tclb}
\end{proof}
\end{comment}

\begin{example}[Injective Trace Enumeration]
    Let $\responseij{1}, \dots, \responseij{\numrnds{}}$ be a trace of correct responses for some particular sequence of challenges for our running example.
Consider the array $\enumi{1}, \enumi{2}, \dots, \enumi{\numrnds}$ where each $\enumi{j} \in \{0, 1\}$. 
    $\enum{}$ is a boolean array of size $\numrnds{}$, and $\enum{}[i]=1$ means that the prover gives an incorrect response to the challenge in round $i$.
    We can define the predicate $\valid{}{}$ as follows.
\begin{align}
    \label{eqn:valid-ex}
    \valid{\enum{},\numrnds{}}{} \doteq\; &
    \big(\exists i.~ 1 \leq i \leq \numrnds \land \enumi{i} \neq 0\big) 
    \land \big(\forall i.~(i < 1 \lor i > \numrnds) \myimpl \enumi{i} = 0\big) 
\end{align}

    The above definition ensures that at least one response is incorrect.
    Notice that for every assignment to $\enum{}$ except the assignment of all zeros, the trace of responses defined by $\forall j.~\responseik{j} = \responseij{j} \oplus \enumi{j}$ (where $\oplus$ is exclusive or) corresponds to a valid trace of the system and satisfies the counting quantifier in Property~\ref{eqn:motiv-ex}. 
Specifically, every such response from the prover is incorrect and will result in the protocol failing.
    We can use the above facts to define the %predicate $\valid{}{}$ and 
    relation $\tren{}$ as follows:
\begin{align}
    \label{eqn:tren-ex}
    \tren{\enum{}, \trc{1}, \trc{2}} \doteq\; 
    & \big(\forall j.~\valueof{\responsei{j}}{\trci{1}{0}} = 
                    \valueof{\responsei{j}}{\trci{2}{0}} \oplus \enumi{j}\big)
    & \land & \\
    & \valueof{\challenge}{\trci{1}{0}} =
      \valueof{\challenge}{\trci{2}{0}} \land
      \valueof{\numrnds{}}{\trci{1}{0}} = 
      \valueof{\numrnds{}}{\trci{2}{0}}
    %& \land & \nonumber \\
    \land
    (\valueof{S}{\trci{1}{\numrnds{}}} \myimpl 
     \lnot\valueof{S}{\trci{2}{\numrnds{}}})
    & & \nonumber
\end{align}
    The pair $\valid{}{}$ and $\tren{}$ form an injective trace enumeration for the system $M$ (defined in Figure~\ref{fig:TS-hat}) for the Property~\ref{eqn:motiv-ex}.
    This is because different $\enum{}$'s will result in different prover responses for the same challenges.
    By Lemma~\ref{lem:tclb}, we can conclude that Property~\ref{eqn:motiv-ex} is satisfied if $\countsat{\valid{\enum{},\numrnds{}}{}}{\enum{}} \geq 2^{\numrnds{}}-1$
\end{example}

\ifthenelse{\boolean{arxiv}}{
We now define the notion of a surjective trace enumeration, which makes it possible to upper bound the number of traces captured by a counting quantifier.

\begin{definition}[Surjective Trace Enumeration]
    \label{defn:surjective}
    A trace enumeration of the system $\mname{}$ consisting of the pair of predicates $\valid{Y, Z}{}$ and $\tren{Y, \trc{1}, \trc{2}}$ is said to be surjective for the QHP $\qformulale$ on the system $\mname{}$ if the following conditions are satisfied.

  \begin{enumerate}
      \item Every pair of traces of $\mname{}$ which satisfy the formula $\varphi$ can be related via the relation $\tren{}$ for some values $(\fixy, \fixz)$ satisfying $\valid{Y, Z}{}$ such that the valuation of the variables in $Z$ in the initial state of $\trc{1}$ is equal to $\fixz$.
    \begin{align}
        \label{eqn:surj-te-1}
        \forall &\trc{0},\trc{1} \in \traces{\mname{}}.~ & \\
                    & ~~ \{\trv{0} \mapsto \trc{0}, \trv{1} \mapsto \trc{1} \}
                            \models \varphi \myimpl 
                    \big(\exists {\fixy, \fixz}.~
                            \valid{\fixy, \fixz}{} \land 
                            \tren{{\fixy}, \trc{0}, \trc{1}} \land
                            \valueof{Z}{\trcij{1}{0}} = \fixz{}
                            \big) & \nonumber 
    \end{align}
    \item Distinct traces, as defined by satisfaction of $\noteq$, must result in different assignments to $Y$ satisfying $\valid{Y,Z}{}$.
        \begin{align}
        \label{eqn:surj-te-2}
            \forall & 
                \trc{0}, \trc{1}, \trc{2} \in \traces{\mname{}}, \fixy{}_1, \fixy{}_2, \fixz{}.~ & \\
            &~~\{\trv{j} \mapsto \trc{1}, \trv{k} \mapsto \trc{2}\} \models\, \noteq 
            \land \valueof{Z}{\trcij{1}{0}} = \valueof{Z}{\trcij{2}{0}} = \fixz{} 
            & \myimpl \nonumber \\
            &~~\{\trv{0} \mapsto \trc{0}, \trv{1} \mapsto \trc{1}\} \models \varphi 
               \land \{\trv{0} \mapsto \trc{0}, \trv{1} \mapsto \trc{2}\} \models \varphi & \myimpl \nonumber \\
            &~~\tren{\fixy{}_1, \trc{0}, \trc{1}} \land 
               \tren{\fixy{}_2, \trc{0}, \trc{2}} \land
               \valid{\fixy{}_1, \fixz{}}{} \land \valid{\fixy{}_2, \fixz{}}{} & \myimpl \nonumber \\ 
            &~~\fixy{}_1 \neq \fixy{}_2 & \nonumber
        \end{align}
  \end{enumerate}
\end{definition}

Analogous to injective trace enumerations, surjective enumerations can be used to upper-bound the number of traces satisfying the counting quantifier.

\begin{restatable}{lemma}{tcub}[Trace Count Upper-Bound]
\label{lem:tcub}
    If the pair $\valid{Y, Z}{}$ and $\tren{Y,\trc{1},\trc{2}}$ form a surjective trace enumeration of the system $\mname{}$ for the QHP formula\linebreak
     $\qformulale{}$ and if $\countsat{\valid{Y, Z}{}}{Y}$ is finite for every assignment to $Z$, then $\mname{} \models \forall \trv{0}.\countq{\trv{1}}{\noteq}{\varphi}{\leq}{\countsat{\valid{Y, Z}{}}{Y}}$. 
\end{restatable}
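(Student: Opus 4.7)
The plan is to mirror the proof of Lemma~\ref{lem:tclb} but dualize the construction: instead of building an injection from satisfying assignments of $\valid{Y,Z}{}$ into traces, I will construct an injection in the opposite direction, from the maximally-large set $\tracesetcnt{}$ captured by the counting quantifier into the set of satisfying $Y$-assignments for $\valid{Y, \fixz{}^*}{}$ at an appropriately chosen $\fixz{}^*$. The two conditions of surjective trace enumeration line up exactly with what is needed: Property~\ref{eqn:surj-te-1} supplies a well-defined map from each trace to a satisfying assignment, and Property~\ref{eqn:surj-te-2} makes that map injective.

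First, I fix an arbitrary $\trc{0} \in \traces{\mname{}}$ and invoke the semantics to obtain a maximally-large set $\tracesetcnt{} \subseteq \traces{\mname{}}$ of pairwise-$\noteq$ distinct traces with $\{\trv{0} \mapsto \trc{0}, \trv{1} \mapsto \trc{1}\} \models \varphi$ for every $\trc{1} \in \tracesetcnt{}$. If $\tracesetcnt{} = \emptyset$ the required inequality $0 \leq \countsat{\valid{Y, Z}{}}{Y}$ is immediate from non-negativity of a cardinality. Otherwise, pick any reference $\trc{*} \in \tracesetcnt{}$ and set $\fixz{}^* = \valueof{Z}{\trcij{*}{0}}$; by well-definedness of the QHP, every trace in $\tracesetcnt{}$ yields the same value for $N(Z) = \countsat{\valid{Y,Z}{}}{Y}$, so it suffices to show $|\tracesetcnt{}| \leq \countsat{\valid{Y, \fixz{}^*}{}}{Y}$.

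Next, for each $\trc{1} \in \tracesetcnt{}$, I apply Property~\ref{eqn:surj-te-1} to the pair $(\trc{0}, \trc{1})$ (which satisfies $\varphi$) to obtain witnesses $\fixy{}_{\trc{1}}, \fixz{}_{\trc{1}}$ with $\valid{\fixy{}_{\trc{1}}, \fixz{}_{\trc{1}}}{}$ valid, $\tren{\fixy{}_{\trc{1}}, \trc{0}, \trc{1}}$ valid, and $\fixz{}_{\trc{1}} = \valueof{Z}{\trcij{1}{0}}$. I then define $f(\trc{1}) = \fixy{}_{\trc{1}}$. To show $f$ lands in $\{\fixy : \valid{\fixy, \fixz{}^*}{}\}$ I need $\fixz{}_{\trc{1}} = \fixz{}^*$ for every $\trc{1} \in \tracesetcnt{}$; this follows from the paper's standing assumption (used earlier to syntactically guarantee well-definedness) that $\varphi$ enforces equality of the variables in $Z$ between $\trv{0}$ and $\trv{1}$, so all initial $Z$-valuations in $\tracesetcnt{}$ coincide with $\valueof{Z}{\trcij{0}{0}}$ and hence with $\fixz{}^*$.

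Finally, I argue $f$ is injective. Given distinct $\trc{1}, \trc{2} \in \tracesetcnt{}$, maximality gives $\{\trv{j} \mapsto \trc{1}, \trv{k} \mapsto \trc{2}\} \models \noteq{}$, and by the previous paragraph $\valueof{Z}{\trcij{1}{0}} = \valueof{Z}{\trcij{2}{0}} = \fixz{}^*$; invoking Property~\ref{eqn:surj-te-2} with this common $\fixz{}^*$ and the already-constructed witnesses yields $f(\trc{1}) = \fixy{}_{\trc{1}} \neq \fixy{}_{\trc{2}} = f(\trc{2})$. Thus $|\tracesetcnt{}| \leq |\{\fixy : \valid{\fixy, \fixz{}^*}{}\}| = \countsat{\valid{Y, \fixz{}^*}{}}{Y}$, which is finite by hypothesis, and by well-definedness this value equals $\valueof{N(Z)}{\trcij{}{0}}$ for every $\trc{} \in \tracesetcnt{}$, discharging the counting-quantifier semantics. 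The main obstacle is the subtle step of pinning all traces to a common $\fixz{}^*$: Property~\ref{eqn:surj-te-2} only compares $\fixy$-values across traces that share their initial $Z$, so without the $Z$-preservation clause of $\varphi$ one would have to partition $\tracesetcnt{}$ by initial $Z$-valuation and argue per class, which is exactly what the paper's well-definedness convention is designed to avoid.
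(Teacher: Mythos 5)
Your proof is correct, but it takes a genuinely different route from the paper's. The paper proves this lemma by induction on $\countsat{\valid{Y,Z}{}}{Y}$: it uses an auxiliary equivalence-class characterization lemma to recast $\sizeof{\tracesetcnt{}}$ as the number of $\eqrelation$-classes, then peels off one satisfying assignment $\mathtt{y}$ at a time, forming $\validprime{Y}{} \doteq \valid{Y,Z}{} \land Y \neq \mathtt{y}$ and deleting from $\traces{\mname{}}$ all traces related to some $\trc{0}$ via $\tren{}$ at $Y = \mathtt{y}$, arguing that this removes at most one equivalence class. You instead build a single explicit injection $f$ from $\tracesetcnt{}$ into the satisfying $Y$-assignments of $\valid{Y,\fixz{}^*}{}$, with Property~\ref{eqn:surj-te-1} furnishing the map and Property~\ref{eqn:surj-te-2} furnishing injectivity. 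Your argument is shorter, needs the finiteness hypothesis only to interpret the count (the cardinality comparison itself holds regardless), and avoids the paper's somewhat delicate inductive step in which the pruned set $\traces{\mname{}'}$ is no longer the trace set of a transition system yet the induction hypothesis is applied to it; the paper's version buys structural symmetry with the lower-bound proof of Lemma~\ref{lem:tclb}. The one point you rightly flag is shared by both arguments: Property~\ref{eqn:surj-te-2} only separates $\fixy{}$-values for traces with a common initial $Z$-valuation, so both proofs implicitly lean on the paper's standing convention that $\varphi$ forces $\valueof{Z}{\trcij{1}{0}} = \valueof{Z}{\trcij{0}{0}}$ (the formal well-definedness condition only constrains $N(Z)$, not $Z$ itself); your explicit acknowledgment of this is a point in your write-up's favor rather than a gap.
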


%A proof is given in Appendix~\ref{app:proofs}.

\begin{example}[Surjective Trace Enumeration]
    The definitions of $\valid{}{}$ and $\tren{}$ provided in Equations~\ref{eqn:valid-ex} and \ref{eqn:tren-ex} are also surjective trace enumerations, for the transition system shown in Figure~\ref{fig:TS-hat}, with respect to Property~\ref{eqn:motiv-ex}.
    As a result, Lemmas~\ref{lem:tclb} and \ref{lem:tcub} together give us a tight bound of $\countsat{\valid{\enum{}, \numrnds{}}{}}{\enum{}}$, and therefore a tight bound on the number of satisfying traces for the counting quantifier in Property~\ref{eqn:motiv-ex}. 
\end{example}
}
{
  Analogous to injective trace enumerations, it is also possible to define surjective trace enumerations that upper-bound the number of traces captured by a counting quantifier.
  Details of surjective trace enumerations are presented in the extended version of the paper~\cite{qhp-arxiv}.
}

\begin{comment}
\subsubsection{Summarizing the Verification Methodology}
\input{figures/architecture.tex}

Lemma~\ref{lem:tclb} (resp. \ref{lem:tcub}) gives us the following methodology for the verification of QHPs involving lower bounds (resp. upper bounds) on the number of traces:

\begin{enumerate}
    \item First, we construct the trace enumeration predicates $\valid{Y, Z}{}$ and $\trel{Y, X_1, X_2}$ and the required Skolem witness functions.

    \item Next, if the property involves lower-bounding (resp. upper-bounding) we show that the $\valid{}{}$ and $\trel{}$ indeed form an injective (resp. surjective) trace enumeration. 
        This is done by checking Properties~\ref{eqn:inj-te-1} and \ref{eqn:inj-te-3} (resp. Properties \ref{eqn:surj-te-1} and \ref{eqn:surj-te-2}).

    \item The final step in the methodology is to show that $\countsat{\valid{Y, Z}{}}{Y} \triangleleft N(X)$.
        For this, we will use the inference rules in Section~\ref{sec:counting}.
\end{enumerate}

Figure~\ref{fig:architecture} shows the complete methodology and the associated definitions and proof obligations for each step in the methodology.
\end{comment}

\section{Model Counting}
\label{sec:counting}

As discussed in the previous section, trace enumeration relations can bound the number of satisfying traces in a QHP.
Given a QHP $\qformula$, appropriate trace enumeration predicates $\valid{Y, Z}{}$ and $\tren{}$ can be used 
%via Lemmas~\ref{lem:tclb} and \ref{lem:tcub}  
to derive that $\qformulavalid{\triangleleft}$.
The final step in our verification methodology is to show validity of $\countsat{\valid{Y, Z}{}}{Y} \triangleleft N(Z)$.
To that end, this section discusses our novel technique for model counting.

\subsection{Model Counting via SMT Solving}

Our approach borrows ideas from enumerative combinatorics~\cite{bjorner-2010,wilf-2005,zb-2010} and introduces the inference rules shown in Figure~\ref{fig:mcsem} to reason about model counts for formulas in $\folt$.
Each of the conclusions in the inference rules is a statement involving model counts of $\folt$ formulas, while each of the premises is a formula in $\folt$ that \emph{does not involve model counts} and {can, therefore, be checked using SAT/SMT solvers}.
\ifthenelse{\boolean{arxiv}}{
We describe these inference rules next.

\paragraph{Range:} This rule states that the number of satisfying assignments in the variable $i$ to a formula of the form $a \leq i < b$ is $b - a$ if $b \geq a$ and zero otherwise.
This rule forms one of the ``base cases'' in our derivations.

\paragraph{Positive:} This rule states that the number of satisfying assignments is always greater than or equal to zero.
We will use this in conjunction with other rules which upper bound the number of satisfying solutions to formulas.

\paragraph{ConstLB and ConstUB:} If a formula $f(X)$ has $c$ distinct solutions, we can conclude that the $\countsat{f(X)}{X}$ is lower-bounded by $c$. 
$\mathit{ConstUB}$ is the converse of $\mathit{ConstLB}$. 
It states that if a formula \emph{does not} have $c$ distinct solutions, $f(X)$ definitely has fewer than $c$ satisfying assignments in $X$. 

\paragraph{UB:} If we have two formulas $f(X)$ and $g(X)$ such that $f(X) \myimpl g(X)$, this means that $g(X)$ has at least as many satisfying solutions as $f(X)$.
}
{
Most of the rules are straightforward, we do not describe them due to space constraints.
The three interesting rules --  $\mathit{Injectivity}$, $\mathit{Ind_\leq}$ and $\mathit{Ind_\geq}$ -- are discussed below.

}
\paragraph{Injectivity:}
%This is the first interesting rule. 
This rule is based on the following idea from enumerative combinatorics. 
Suppose we have two sets $A$ and $B$. 
We can show that $\sizeof{A} \leq \sizeof{B}$ if there exists an injective function from $A$ to $B$.
Translating this to model counts, the set $A$ in the rule corresponds to satisfying assignments to $f(X)$, $B$ corresponds to satisfying assignments to $g(Y)$ and $\mathscr{F}$ is the injective witness function.
%This rule states that if we have an injective witness function $\mathscr{F}$ that ``lifts'' every satisfying assignment of $f(X)$ to a satisfying assignment of $g(Y)$, then $g(Y)$ has at least as many satisfying assignments in $Y$ as $f(X)$ in $X$.

\begin{comment}
\paragraph{Disjoint and AndUB} These rules give us bounds for the number of satisfying assignments to the conjunction of the formulas $f(X)$ and $g(Y)$. The bound is guaranteed to be tight if the variables in the formulas $f(X)$ and $g(Y)$ are disjoint.

\paragraph{Or} This rule gives us a bound on the number of satisfiable assignments to a disjunction.
\end{comment}

\paragraph{$\mathit{Ind_\geq}$ and $\mathit{Ind_{\leq}}$:} Suppose the formulas $f(X,n)$ and $g(Y,n)$ are parameterized by the integer variable $n$.
If an injective witness function $\mathscr{G}(X, Y, n)$ is able to ``lift'' satisfying assignments of $f(X_n,n)$ and $g(Y_n, n)$ into a satisfying assignment of $f(X_{n+1}, n+1)$, then we can conclude that the number of satisfying assignments to $f(X, n+1)$ are at least as many as the product of the number of satisfying assignments to $f(X, n)$ and $g(Y, n)$.
$\mathit{Ind}_\leq$ is the surjective version of this rule. 
It applies when a satisfying assignment to $f(X_{n+1}, n+1)$ can be ``lowered'' into satisfying assignments to $f(X_n, n)$ and $g(Y_n, n)$ where the values of $X_n$ and $Y_n$ are given by the witness functions $\mathscr{H}_x$ and $\mathscr{H}_y$ respectively.

\begin{comment}
\subsubsection*{Implementation}
We implement proof verification using the above inference rules by introducing uninterpreted functions for each term of the form $\countsat{f(X)}{Y}$.
If all the premises of rule are satisfied, we assume the conclusion of the rule as an axiom that states a fact about the uninterpreted functions modeling this particular counting term.
This allows the SMT solver to perform reasoning in LIA over the model counts, and manual guidance is only required in choosing the inference rules and associated witness functions (if any) that are  required for a particular proof.
\end{comment}

\begin{figure}[!htb]
  \begin{mdframed}
  \[
  \begin{array}{cc}
  \infer[\mathit{Range}]
  {
      (\countsat{a \leq i < b}{i})~=~\max{(b - a, 0)}
  }
  {
  }
  & 
  ~~~~  
  \infer[\mathit{Positive}]
  {
    \countsat{f(X)}{Y}~\geq~0
  }
  {
  }
  \end{array}
  \]
  \[
  \begin{array}{c}
  \infer[\mathit{ConstLB}]
  {
    \countsat{f(X)}{X}~\geq~c
  }
  {
      \bigwedge_{i=1}^c f(X_i) \land \mathit{distinct}(X_1, \dots, X_c) \text{ is \sat}
  }
  \end{array}
  \]
  \[
  \begin{array}{c}
  \infer[\mathit{ConstUB}]
  {
    \countsat{f(X)}{X}~<~c
  }
  {
      \bigwedge_{i=1}^c f(X_i) \land \mathit{distinct}(X_1, \dots, X_c) \text{ is \unsat}
  }
  \end{array}
  \]
  \[
  \begin{array}{c}
  \infer[\mathit{UB}]
  {
    \countsat{f(X, Y)}{X}~\leq~\countsat{g(X, Y)}{X}
  }
  {
      f(X, Y) \myimpl g(X, Y)
  }
  \end{array}
  \]

  \[
  \begin{array}{c}
  \infer[\mathit{AndUB}]
  {
      \countsat{h(X,Y)}{X \cup Y}~\leq~\countsat{f(X)}{X} \times \countsat{g(Y)}{Y}
  }
  {
    h(X, Y) \myiff f(X) \land g(Y)
  }
  \end{array}
  \]
  \[
  \begin{array}{c}
  \infer[\mathit{Injectivity}]
  {
    \countsat{f(X)}{X}~\leq~\countsat{g(Y)}{Y} 
  }
  {
    \hfil
      f(X)  \myimpl g(\mathscr{F}(X)) 
    \hfil\\
    \big( f(X_1) \land f(X_2) \land X_1 \neq X_2\big) \myimpl \mathscr{F}(X_1) \neq \mathscr{F}(X_2) 
    \vphantom{\text{\Large{B}}}
  }
  \end{array}
  \]
  \[
  \begin{array}{c}
  \infer[\mathit{Disjoint}]
  {
      \countsat{h(X, Y)}{X \cup Y}~=~\countsat{f(X)}{X} \times \countsat{g(Y)}{Y}
  }
  {
    h(X, Y) \myiff f(X) \land g(Y)      & X \cap Y = \emptyset
  }
  \end{array}
  \]
  \[
  \begin{array}{c}
  \infer[\mathit{Or}]
  {
    \countsat{f(X, Y)}{X}~=~
        \countsat{g(X, Y)}{X} + \countsat{h(X, Y)}{X} - \countsat{\big( g(X, Y) \land h(X, Y)\big)}{X}
  }
  {
    f(X, Y) \myiff g(X, Y) \lor h(X, Y)
  }
  \end{array}
  \]
  \[
  \begin{array}{c}
  \infer[\mathit{Ind_\geq}]
  {
    \countsat{f(X,n+1)}{X}~\geq~\countsat{f(X, n)}{X} \times \countsat{g(Y, n)}{Y}
  }
  {
    \hfil
    \big(f(X, n) \land g(Y, n)\big) \myimpl f(\mathscr{G}(X, Y, n), n+1) 
    \hfil \\ 
      (X_1 \neq X_2 \lor Y_1 \neq Y_2) 
      \myimpl 
      \mathscr{G}(X_1, Y_1, n) \neq \mathscr{G}(X_2, Y_2, n)
      \vphantom{\text{{\Large b}}}
  }
  \end{array}
  \]
  \[
  \begin{array}{c}
  \infer[\mathit{Ind_\leq}]
  {
    \countsat{f(X,n+1)}{X}~\leq~\countsat{f(X, n)}{X} \times \countsat{g(Y, n)}{Y}
  }
  {
    \hfil
    f(X, n+1) \myimpl 
      \big(f(\mathscr{H}_x(X, n+1), n) \land g(\mathscr{H}_y(X, n+1), n) \big)
    \hfil \\ 
    %\big( 
      %f(X_1, n+1) \land
      %f(X_2, n+1) \land
      X_1 \neq X_2
    %\big)
    \myimpl %\hfil\\
    %~~~~~~~~~~~~~~~~~~~~~~~~ \hfil
      \big(
        \mathscr{H}_x(X_1, n) \neq \mathscr{H}_x(X_2, n) 
        \lor
        \mathscr{H}_y(Y_1, n) \neq \mathscr{H}_y(Y_2, n) 
      \big)
    \vphantom{\text{{\Large b}}}
  }
  \end{array}
  \]
  \end{mdframed}
  \caption{Model counting proof rules. Unless otherwise specified, premises are satisfied when the formula is valid. Conclusions have an implicit universal quantifier.}
  \label{fig:mcsem}
\end{figure}
\ifthenelse{\NOT{\boolean{arxiv}}}{\vspace{-0.35in}}{}

\subsection{Model Counting in the Motivating Example}
The definition of the predicate $\valid{}{}$ in the motivating example is shown below. 
\begin{align*}
    \valid{\enum{},\numrnds{}}{} \doteq~ & \big(\exists i.~ 1 \leq i \leq \numrnds \land \enumi{i} \neq 0\big) \land
    \big(\forall i.~((i < 1 \lor i > \numrnds) \myimpl \enumi{i} = 0) \big) \nonumber 
    %\label{eqn:valid-redef}
\end{align*}

Our task is to show $\countsat{\valid{\enum{}, \numrnds{}}{}}{\enum{}} = 2^\numrnds{} - 1$.
Recall that $\enum{}$ is an array of binary values (i.e. the integers $0$ and $1$) and consider the following predicates:
$\verf{} \doteq~ \big(\forall i.~(i < 1 \lor i > \numrnds) \myimpl \enumi{i} = 0\big)$, 
$\valid{\enum{},\numrnds{}}{1}  \doteq~ \big(\forall i.~\enumi{i} = 0\big)$ and
$\mathcal{W}(i) \doteq~ 0 \leq i < 2$.
Using these definitions, the proof is as follows.
\begin{enumerate}
    \item ($\mathit{ConstUB}$, $\mathit{Positive}$) $\countsat{\verf{} \land \vero}{\enum{}} = 1$.
    \item ($\mathit{Or}$) $\cverf = \cver{} + \cvero{}$.
    \item ($\mathit{ConstLB}$, $\mathit{ConstUB}$) $\cvero{} = 1$.
    \item ($\mathit{ConstLB}$, $\mathit{ConstUB}$) $\cverfo{} = 2$.
    \item ($\mathit{Ind}_\leq$): $\cverf{} \leq \countsat{\mathcal{W}(i)}{i}\times\cverfp$.
    \item ($\mathit{Ind}_\geq$): $\cverf{} \geq \countsat{\mathcal{W}(i)}{i}\times\cverfp$.
    \item ($\mathit{Range}$): $\countsat{\mathcal{W}(i)}{i} = 2$.
    \item (4 -- 7) imply that $\cverf{} = 2\times\cverfp{}$, $\cverfo{} = 2$, this means $\cverf{} = 2^\numrnds{}$.
    \item (2, 3, 8) imply that $\cver{} = 2^\numrnds{} - 1$.
\end{enumerate}

In step 5, the witness function is $\mathscr{G}(\enum{}, \numrnds{}, i) \doteq \enum{}[\numrnds{} + 1 \mapsto i]$, while in step  6, they are $\mathscr{H}_{\langle \enum{}, \numrnds{} \rangle}(\enum{}, \numrnds{} + 1) \doteq \langle \enum{}[\numrnds{} + 1 \mapsto 0], \numrnds{} \rangle$ and $\mathscr{H}_i(\enum{}, \numrnds{}+1) \doteq (\enum{}[\numrnds+1])$.\footnote{The notation $\mathit{arr}[i \mapsto v]$ denotes an array that is identical to $\mathit{arr}$ except for index $i$ which contains $v$.}
Note steps 8 and 9 are automatically discharged by the SMT solver.

\section{Experimental Results and Discussion}
\label{sec:eval}

In this section, we present an experimental evaluation of the use of trace enumerations for the verification of quantitative hyperproperties.

\subsection{Methodology}
  We studied five systems with varying complexity and QHPs. 
  These were modeled in the \uclid{} modeling and verification framework~\cite{uclid5-memocode18, uclid5-www}, which uses the Z3 SMT solver (v4.8.6)~\cite{z3-tacas-08} to discharge the proof obligations. 
  The experiments were run on an Intel i7-4770 CPU @ 3.40GHz with 8 cores and 32 GB RAM. 

  The verification conditions are currently manually generated from the models, but automation of this is straightforward and ongoing.
  The $k$-trace properties were proven using self-composition~\cite{BartheCsfw04, BartheFM11} and induction.
  %Note that more sophisticated approaches to self-composition are possible (e.g. \cite{CHL16, lazysc-cav18}).
  A number of strengthening invariants had to be specified manually for the inductive proofs.
  Many of the invariants are relational \emph{and} quantified and, therefore, difficult to infer algorithmically.
  We note that recent work has made progress toward automated inference of quantified invariants~\cite{quic3-18,qsygus-19}. 
  % so it is plausible that continued advances in non-quantitative hyperproperty verification may deliver fully automated methods for quantitative hyperproperty verification as well via the use of trace enumerations.

\ifthenelse{\boolean{arxiv}}{
\subsubsection{Implementation Issues:}
In addition to the techniques in Sections~\ref{sec:enum} and \ref{sec:counting}, we must address two additional practical challenges.
    The first challenge is the definition of $\tren{Y, \trc{1}, \trc{2}}$ as some general relation over traces. 
    This poses difficulties in our proofs which rely on induction.
    Hence, we impose a syntactic restriction on the relation $\tren{}$ by constraining it to be a relational hyperinvariant of the transition system.
%\footnote{Note that as long as the relation $\tren{}$ is a $k$-safety~\cite{TerauchiSas05} property expressed as a HyperLTL formula, the transition system can be augmented with monitor variables in order to convert a general $k$-safety property into a relational $k$-invariant. 
%One example of such a conversion is presented by Claessen et al.~\cite{ltlckt-13} for LTL -- this can be readily extended to HyperLTL.}. 
In other words, $\tren{}$ has the  following form:
%\begin{align*}
$\tren{Y, \trc{1}, \trc{2}} \doteq \forall i.~\trel{Y, \valueof{X}{\trci{1}{i}}, \valueof{X}{\trci{2}{i}}}$.
%\end{align*}
%In the above definition we refer to $\trel{}$ as a \emph{trace enumeration predicate}.
%In comparison to $\tren{}$ which is a relation over traces, $\trel{}$ is a relation over states. 
This allows using induction and self-composition~\cite{BartheCsfw04, BartheFM11, TerauchiSas05} to verify that a relation is indeed a trace enumeration predicate.

%\subsubsection{Dealing with Quantifier Alternation:}
    The second challenge is the quantifier alternation. Definitions~\ref{defn:injective} and \ref{defn:surjective} involve quantifier alternation over trace variables and verification of such hyperproperties is challenging.
    We address this problem by manually specifying Skolem witness functions for the existential quantifiers~\cite{skolem-67}.
Note that utilizing a Skolem function in this context is not straightforward because we need to construct a witness function for an infinitely long trace.
We side-step the issue by constructing a witness function for the initial state of the corresponding trace and imposing an additional condition requiring that if the initial states of two traces are related via the relation $\trel{}$, then all subsequent states must also be related.
}
{}
%Due to a lack of space, these details are presented in Appendix~\ref{app:skolem}.

\subsection{Overview of Results}
Due to limited space, we only provide a brief description of our benchmarks for evaluation and refer the interested reader to
\ifthenelse{\boolean{arxiv}}{
Appendix \ref{appendix:microbenchmarks} for a more detailed discussion.
}
{
our full paper \cite{qhp-arxiv} for a more detailed discussion.
}
We have also made the models and associated proof scripts available at \cite{experiments-www}. 
A brief overview of the case studies follows. 
\begin{table}[htbp]
    \caption{Verification Results of Models.}
    \begin{center}
       \begin{tabular}{llp{1cm}p{1cm}p{1cm}p{1cm}p{2cm}}
       \toprule
          Benchmark & Hyperproperty & Model LoC & Proof LoC & Num. Annot. & Verif. Time \\ 
       \midrule
          Electronic Purse \cite{backes-09} ~         & Deniability                   & 46  & 93  & 9   & 3.92s \\
          Password checker \cite{mcqhyper-2018} ~     & Quantitative non-interference & 59  & 100 & 10  & 4.69s \\
          F-Y Array Shuffle                           & Quantitative information flow & 86  & 195 & 96  & 7.38s \\
          ZK Hats (Sec. \ref{sec:motivating-example}) & Soundness                     & 91  & 191 & 36  & 6.34s \\
          Path ORAM \cite{path-oram-2013} ~           & Deniability                   & 587 & 209 & 142 & 9.74s \\
       \bottomrule  
      \end{tabular}
    \end{center} 
    \label{tab:results-model}
\end{table}

\begin{enumerate}
  \item {\bf Electronic Purse.} We model an electronic purse, with a secret initial balance, proposed by Backes et al.~\cite{backes-09}. A fixed amount is debited from the purse until the balance is insufficient for the next transaction. We prove a deniability property: there is a sufficient number of traces with identical attacker observations but different initial balances.
  \item {\bf Password Checker.} We model the password checker from Finkbeiner et al.~\cite{mcqhyper-2018}, but we allow passwords of unbounded length $n$. We prove quantitative non-interference: information leakage to an attacker is $\leq n$ bits.
  \item {\bf Array Shuffle.} We implement a variant of the Fisher-Yates shuffle. 
  We chose this because producing random permutations of an array is an important component of certain cryptographic protocols (e.g., Ring ORAM~\cite{roram-15}). We prove a quantitative information flow property stating that all possible permutations are indeed generated by the shuffling algorithm.
  \item {\bf ZK Hats.} We prove soundness of the zero-knowledge protocol in Section~\ref{sec:motiv}. 
  \item {\bf Path ORAM.} Discussed in Section \ref{sec:pathoram}.
  % We prove deniability property -- the adversary (untrusted server) learns nothing about the access pattern of the client by observing server accesses, as discussed in Section \ref{sec:pathoram}.
  % Path ORAM~\cite{path-oram-2013} mediates access requests from a client to an untrusted storage server, with the goal of ensuring statistical privacy of the access patterns. We prove a deniability property to that effect.
\end{enumerate}
The properties we prove on these models and the results of our evaluation are presented in Table \ref{tab:results-model} which
shows the size of each model, the number of lines of proof code (this is the code for self-composition, property specification, etc.), the number of verification annotations (invariants and procedure pre-/post-conditions) and the verification time for each example.
  Once the auxiliary strengthening invariants are specified, the verification completes within a few seconds. 
  This suggests that the methodology can scale to larger models, and even implementations.
  The main challenge in the application of the methodology is the construction of the trace enumeration relations, associated witness functions, and the specification of strengthening invariants.
  Each of these requires application-specific insight.
  Since most of our enumerations and invariants are quantified, some of the proofs also required tweaking the SMT solver's configuration options (e.g. turning off model-based quantifier instantiation in Z3).
  %However, we believe automation of this part of the methodology is also possible, perhaps using techniques from the syntax-guided synthesis~\cite{seshia-pieee15, syngusys-fmcad13} and our work paves the way for more research into fully-automated \emph{and} scalable verification of QHPs.
\subsection{Deniability of Path ORAM}
    \label{sec:pathoram}
    In this section,  we discuss our main case study: the application of trace enumerations for verifying deniability of server access patterns in Path ORAM \cite{path-oram-2013}, a practical variant of Oblivious RAM (ORAM) \cite{oram-jacm-96}.
    ORAMs refer to a class of algorithms that allow a client with a small amount of storage to store/load a large amount of data on an untrusted server while concealing the client access pattern from the server. 
    Path ORAM stores encrypted data on the server in an augmented binary tree format. Each node stores $Z$ data blocks, referred to as \emph{buckets} of size $Z$. Additionally, the client has a small amount of local storage called the \emph{stash}. 
    The client maintains a secret mapping called the \emph{position map} to keep track of the path where a data block is stored on the server.
    Each entry in the position map maps a client address to a leaf on the server.
    Path ORAM maintains the invariant that every block is stored somewhere along the path from the root to the leaf node that the block is mapped to by the position map.
    \ifthenelse{\boolean{arxiv}}{
    The position map is initialized randomly, and an entry of the position map is updated after every access to that location.

    \begin{algorithm}
      \DontPrintSemicolon
      \SetAlgorithmName{Algorithm}{}{}
      \SetKwInOut{Secret}{Secret Input}
      \SetKwInOut{Input} {Public Input}
      \SetKwInOut{Output}{Public Output }
      
      \Secret{OP, $\request{}$, data$^*$}
      \Output{$\leaf{}$ \tcc*[f]{Accessed Leaf}}

      \BlankLine
        $\leaf{} \gets \pmap{}[\request{}]$ \;
        $\pmap{}[\request{}] \gets$ {\bf UniformRandom}(1, $\numblks{}$) \tcc*[f]{remap}\;
        stash $\gets$ stash $\cup$ {\bf ReadPath}($\leaf{}$) \;
        \If{$OP = \code{WRITE}$}{
          {\bf UpdateData}(stash, $\request{}$, data$^*$) \;
        }
        {\bf WritePath}($\leaf{}$, stash) \;

      \caption{\textsc{Oram Access Protocol}}
      \label{alg:pathoram}
    \end{algorithm}

    The Path ORAM access function shown in Algorithm~\ref{alg:pathoram} can be logically divided into 5 steps: (1) getting the mapped leaf corresponding to the requested data block, (2) updating the position map entry for the accessed data block, (3) reading the path on the server corresponding to the leaf, (4) updating the data in case of a write operation, and (5) writing the data blocks back from stash to the path read from ORAM.

    Our model of Path ORAM is a transition system where each step corresponds to a single execution of the access function for an arbitrary operation and address.
    The model has an unbounded size tree and stash.
    Each bucket contains four nodes -- this is the recommended configuration for Path ORAM~\cite{path-oram-2013}.
    The model makes uses of various uninterpreted functions for abstraction, e.g. modeling the path from the root of the tree to a leaf.
    }
    {}
  
    \subsubsection{Deniability of Server Access Patterns in Path ORAM:}
    \ifthenelse{\boolean{arxiv}}{
    Intuitively, the security of path ORAM requires that the adversary (untrusted server) learns nothing about the access pattern of the client by observing server accesses.
    In Path ORAM, the position map is initialized randomly and every subsequent update to the position is also done by sampling from a uniform random distribution. Hence every secret position map is equally likely to be chosen. 
    }{}
    We formulate security of access patterns in Path ORAM as a deniability property stating that for every infinitely-long trace of server accesses, there are $(\numblks - 1)!$ traces of client accesses with identical server observations but different client requests.
    \begin{equation}
        % \noteq must always use F (future/eventually) to ensure well-definedness.
      \forall \trv{0}.~\#\trv{1}: \ltlf{(\delta{}_{\trv{j}, \trv{k}})}.~
              \ltlg{(\psi_{\trv{0}, \trv{1}})}~\geq~(\numblks - 1)!
      \label{eqn:pathoram-property}
    \end{equation}
    % The binary predicates $\delta$ and $\psi$ are defined as follows.
    % \begin{align*}
    %   &  \delta(\stat{}_1, \stat{}_2) & \doteq~ &
    %       \mathit{request}(\stat{}_1) \neq \mathit{request}(\stat{}_2) \nonumber \\
    %   & \psi(\stat{}_1, \stat{}_2) & \doteq~& 
    %         \mathit{access}(\stat{}_1) = \mathit{access}(\stat{}_2) 
    % \end{align*}
    The binary predicate $\delta$ imposes the requirement that the client's request are different in each of the traces captured by the counting quantifier, 
    %-- implying that the actual data requested by the client is different in all the traces, 
    and the condition in $\psi$ states that all the traces captured by the counting quantifier have the same observable access pattern as $\trv{0}$.

    \subsubsection{Verification of Deniability in Path ORAM:}
    To verify the QHP stated in Equation \ref{eqn:pathoram-property}, for every trace of server accesses we need to generate $(\numblks-1)!$ traces of client requests that produce the same server access.
    
    Suppose we have Path ORAM (a) that is initialized with some position map.
    Now consider the Path ORAM (b) with the same number of blocks, but with an initial position map that is a derangement of the position map of (a).\footnote{A {\bf derangement} of a set is a permutation of the elements of the set such that no element appears in its original position.}
    The key insight is that ORAM (b) can simulate an identical server access pattern as ORAM (a) by appropriately choosing a different client request that maps to the same leaf that is being accessed by (a) and then updating the position map identically as (a).
    This is shown in Figure~\ref{fig:pathoram-demonstration}, which shows two Path ORAMs that produce identical server access patterns but service different client requests.

    \begin{figure}[!t]
      \centering
      \tikzstyle{filled}=[fill=blue!15]
\tikzstyle{acc}=[fill=yellow!15]
\tikzstyle{arrow}=[line width=1mm,->,>=stealth]
\tikzstyle{textNode}=[rectangle, minimum height=0.6cm]

% y coordinate of the bottom layer of pathORAM figures
\newcommand\botFigY{-3.35}

\begin{tikzpicture}[
    every node/.style = {draw, circle, minimum size = 5mm},
    level 1/.style = {sibling distance=1.5cm},
    level 2/.style = {sibling distance=0.75cm}, 
    level distance = .75cm
  ]

  \foreach \n/\x/\y/\S/\sf/\A/\af/\D/\df/\B/\bf/\C/\cf/\E/\ef/\F/\ff/\pz/\po/\pt/\pth/\req in {
          1/0/0/3/filled/1/filled/ // //4/filled/2////0/2/1/1/4,
          2/3.5/0/4/filled/1///filled///3//2/filled///0/2/1/3/2,
          3/7/0/2/filled/1/filled/4///filled/3///// /0/3/1/3/1,
          4/0/\botFigY/4/filled/3/filled/ // //2/filled/1// / /2/1/0/1/2,
          5/3.5/\botFigY/2/filled/3///filled///4//1/filled// /2/3/0/1/1,
          6/7/\botFigY/2/filled/3/filled/1///filled/4///// /3/3/0/1/3
      }
      {
        \node (Start) at (\x,\y) [\sf] {\footnotesize \S}
            child {   node (A) [\af] {\footnotesize \A}
                child { node (B) [\bf] {\footnotesize \B}}
                child { node (C) [\cf] {\footnotesize \C}}
            }
            child {   node (D) [\df] {\footnotesize \D}
                child { node (E) [\ef] {\footnotesize \E}}
                child { node (F) [] {\footnotesize \F}}
            };

        \begin{scope}[nodes = {draw = none}]
            \path (Start) -- (A);
            \path (A)     -- (B);
            \path (A)     -- (C);
            \path (Start) -- (D);
            \path (D)     -- (E);
            \path (D)     -- (F);
            \begin{scope}[nodes = {below = 4pt}]
                \node [text=black!50] at (B) {\footnotesize $0$};
                \node [text=black!50] at (C) {\footnotesize $1$};
                \node [text=black!50] at (E) {\footnotesize $2$};
                \node [text=black!50] at (F) {\footnotesize $3$};
            \end{scope}
        \end{scope}

        \node (pmap\n) [textNode,draw=red!15,fill=red!10, xshift=-.45cm]  at (\x,\y-2.45) {\footnotesize p= [\pz, \po, \pt, \pth]};
        \node (req\n) [textNode,draw=green!15,fill=green!10, right=0cm of pmap\n] {\footnotesize r= \req};
      }

      \node [draw=none] (lab1) at (-1.5,0) {(a)};
      \draw [arrow] (1.5,-0.5) -- (2.25,-0.5);
      \draw [arrow] (5,-0.5) -- (5.75,-0.5);
      \draw [arrow] (8.5,-0.5) -- (9.25,-0.5);
      \node [draw=none] (etc1) at (9.75,-0.5) {\Huge ...};

      \node [draw=none] (lab1) at (-1.5,\botFigY) {(b)};
      \draw [arrow] (1.5,\botFigY-0.5) -- (2.25,\botFigY-0.5);
      \draw [arrow] (5,\botFigY-0.5) -- (5.75,\botFigY-0.5);
      \draw [arrow] (8.5,\botFigY-0.5) -- (9.25,\botFigY-0.5);
      \node [draw=none] (etc2) at (9.75,\botFigY-0.5) {\Huge ...};

      \draw [thick, dashed] (-1.5,\botFigY+0.5) -- (10,\botFigY+0.5);
 
\end{tikzpicture}
      \caption{Path ORAM systems satisfying the counting quantifier of Equation \ref{eqn:pathoram-property}. 
      %The position maps and the requests of the two systems are permutations of each other and the observed path is identical in each of the accesses.
      }\label{fig:pathoram-demonstration}
    \end{figure}

%    Let, $\permhat$ be the inverse permutation of $\perm$, such that, for all $i$, $\perm\left(\permhat(i)\right) = i$. 
%    Then the second system can produce identical observations to the first system by doing the following: for each access request, $d_i$, in the first system that results in reading and writing to the path identified by the leaf node $l_i$, the second system accesses the data block $d_{\permhat(i)}$, 
%    which will result in accessing the path identified by $l_{\perm(\permhat(i))} = l_i$. Hence, accessing the paths identified by the same leaf node, $l_i$, in both the systems. 
%    The position maps after each access are updated identically, such that $\pmap{0}'(d_i) = \pmap{1}'(d_{\permhat(i)}) = \remap$. 
    
%    Figure \ref{fig:pathoram-demonstration} depicts this behavior of two path ORAM systems, where the position map of Path ORAM (b) is a permutation of the position map of Path ORAM (a). The access requests in both the systems are different and yet for each of the accesses the observed access path is identical in both the systems, shown in blue. 

    \ifthenelse{\boolean{arxiv}}{
    The above insight leads to the following trace enumeration.
    $\valid{\perm, \permhat, \numblks }{}$ captures the notion of a valid derangement:
    \begin{align}
      & \valid{\perm, \permhat, \numblks}{} \:\doteq\nonumber \\
      & ~~~ \forall i .\: 1 \leq i \leq \numblks \,.\,\big(\perm[i] \neq i \land \permhat[i] \neq i \big)&\land \nonumber\\
        & ~~~ \forall i .\: 1 \leq i \leq \numblks \implies 1 \leq \perm[i] \leq \numblks
        \land 1 \leq \permhat[i] \leq \numblks &\land \nonumber \\
      & ~~~ \forall i,j .\: (1 \leq i \leq \numblks \land 1 \leq j \leq \numblks \land i \neq j) \implies& \nonumber \\ 
      & ~~~~~~~~~~~~~~~~~~~~~~~~~~~~~~~~~~~   \big((\perm[i] \neq \perm[j]) \land (\permhat[i] \neq \permhat[j]) \big)&\land \nonumber \\
      & ~~~ \forall i .\: i < 0 \lor i > \numblks \implies \big( (\perm[i] = 0) \land (\permhat[i] = 0) \big) & \land \nonumber \\
      & ~~~ \forall i. \: 1 \leq i \leq \numblks \,. \Big(\exists j. \: 1 \leq j \leq \numblks .\: \big(\permhat[i] = j \implies \perm[j] = i\big)\Big)
    \end{align}
    Every satisfying assignment to the above is a derangement $\perm{}$ and its inverse $\permhat{}$. 
    These can be used to permute the client's accesses in ORAM (a) so that the server accesses made by ORAM (b) are identical to those of ORAM (a).

    Next we define the relation $\tren{}$.
    \begin{align}
      \tren{\{\perm, \permhat\}, \trc{1}, \trc{2}} \doteq\; 
      \label{eqn:tren-poram}
      & \valueof{\numblks}{\trci{1}{0}} = \valueof{\numblks}{\trci{2}{0}}  \land
        \valueof{\stashsz}{\trci{1}{0}} = \valueof{\stashsz}{\trci{2}{0}} 
      & \land & \nonumber \\  
      & \big(\forall i.~\forall j.~\valueof{\pmap{}[j]}{\trci{2}{i}} = 
                      \valueof{\pmap{}[\perm[j]]}{\trci{1}{i}} \big)
      & \land & \nonumber \\
      & \big(\forall i.~\valueof{\request{}}{\trci{2}{i}} = 
      ~\valueof{\permhat[\request{}]}{\trci{1}{i}} \big)
      & \land & \nonumber \\
      & \big(\forall i.~\valueof{\leaf{}}{\trci{1}{i}} =
        ~\valueof{\leaf{}}{\trci{2}{i}} \big) 
      & \land & \nonumber \\
      & \big(\forall i.~\valueof{\remap}{\trci{1}{i}} =
        ~\valueof{\remap}{\trci{2}{i}} \big)
      & & 
    \end{align}

$\tren{}$ relates two traces which are such that the position map of one is a permutation of the other, their client requests are appropriately permuted, and have identical access patterns and position map updates.
    }
    {
      The above insight leads to a trace enumeration where two traces are related via $\tren{}$ if their position maps are derangements of each other, the client accesses are permuted as per the derangement while all other parameters of the ORAM are identical. 
      We use this to prove Property~\ref{eqn:pathoram-property}.
    }

    \ifthenelse{\boolean{arxiv}}{
    \subsubsection{Discussion:}
    It is important to note that deniability of the access pattern is just one aspect of the security of Path ORAM.
    Path ORAM also requires that the data stored on the server be encrypted and authenticated using randomized authenticated encryption algorithm.
    %This encryption and authentication can be incorporated using Dolev-Yao~\cite{dy-81} axioms.
    Verifying these aspects of Path ORAM is unrelated to quantitative hyperproperties, so we do not incorporate them in our model.
    }
    {
    }

    % \subsubsection{Counting satisfying assignments to Enumeration.}
    % The next task is to show that the number of satisfying assignments to $\valid{A, \numblks}{}$ is lower bounded by $(\numblks - 1)!$. Notice that the predicate $\valid{A, \numblks}{}$ is nothing but the precise definition of a derangement
    % \footnote{Additionally, as we are referring to the SMT theory of arrays, we constrain the values taken by the indices less than $0$ and greater than $\numblks$ by imposing the condition: $\forall i .\: i < 0 \lor i > \numblks \implies A[i] = 0$.}, 
    % and the satisfying assignments to this predicate will give all the derangements of the set $A$ of size $\numblks$. 
    % It can be shown that the number of derangements of a set of size $n$ is equal to the \emph{subfactorial} of $n$ ($!n$).
    % % \todonote{citation}. 
    % Futhermore, $!n = \left[ n!/e \right]$, where $[\cdot]$ is the nearest integer function and $e = 2.718$ is the Euler's number, can be lower bounded by $(n-1)!$ for all $n \geq 3$. This can be used to constitute a pen and paper proof for the fact that the number of satisfying assignments of the predicate $\valid{A, \numblks}{}$ is lower bounded by $(\numblks - 1)!$. 
    % % The question remains, how can one construct a machine checked proof to prove this lower bound? 

\section{Related Work}
\label{sec:related}

\noindent \textbf{Hyperproperties}:
Research into secure information flow started with the seminal work of Denning and Denning~\cite{Denning77}, Goguen and Meseguer~\cite{goguen-sp82} and Rushby~\cite{rushby-82}.
The self-composition construction for the verification of secure information flow was introduced by Barthe et al.~\cite{BartheCsfw04}.
\ifthenelse{\boolean{arxiv}}{
Terauchi and Aiken identified the class of $k$-safety properties~\cite{TerauchiSas05}, which is an important subset of the class of hyperproperties.
}{}
Clarkson and Schneider~\cite{ClarksonS10} introduced the class of specifications called hyperproperties\ifthenelse{\boolean{arxiv}}{and showed that both noninterference and observational determinism~\cite{obsdet-03,roscoe-sp-95}, as well many other security specifications were instances of hyperproperties.}{.}
Clarkson and colleagues also introduced HyperLTL and HyperCTL$^*$~\cite{clarkson-14}, which are temporal logics for specifying hyperproperties, while verification algorithms for these were introduced by Finkbeiner and colleagues in~\cite{hyperltl-15}.
Cartesian Hoare Logic~\cite{CHL16} was introduced by Sousa and Dillig and enables the specification and verification of hyperproperties over programs as opposed to transition systems.
A number of subsequent efforts have studied hyperproperties in the context of program verification~\cite{lazysc-cav18,shemer-cav-19,farzan-cav-19,decomp-17}.

\noindent \textbf{Quantitative Information Flow}:
Quantitative hyperproperties build on the rich literature of quantitative information flow (QIF)~\cite{smith2009foundations, alvim2012quantitative, clarkson2005belief, clark2005quantitative, gray1992toward}. 
The QIF problem is to quantify (or bound) the number of bits of secret information that is attacker-observable.
Certain notions of QIF can be expressed as QHPs. 
It is important to note QHPs can express security specifications (e.g., soundness) that are not QIF.
Yasuoka and Terauchi studied QIF from a theoretical perspective and showed that it could be expressed as hypersafety and hyperliveness~\cite{yasuoka2014quantitative}. 
\ifthenelse{\boolean{arxiv}}{
They have the first construction to show that QIF with a constant bound of $b$ bits can be expressed as a $k$-safety property where $k=2^b + 1$.
In principle, this means that QIF can be expressed as HyperLTL/HyperCTL~\cite{clarkson-14} formulas and verified using the self composition-based algorithm in~\cite{hyperltl-15} assuming the bound is static.
%However, this results in an exponential blow-up in the size of the QIF \emph{property} itself and does not scale even for the tiniest examples.
}{}
Approaches based on QIF measures such as min-entropy \cite{smith2009foundations}, Shannon entropy~\cite{clark2007static} etc. have also been applied in the context of static analysis~\cite{kopf2012automatic}.
\ifthenelse{\boolean{arxiv}}{
QIF has been studied for specific applications; e.g. cache-based side-channel attacks~\cite{doychev2015cacheaudit,kopf2012automatic} and web applications~\cite{chen2010side, zhang2010sidebuster, phan2014abstract}.
These stand in contrast to our approach, which permits verification of a large and generic class of QHPs. 
}{}

\noindent \textbf{Quantitative Hyperproperties}:
Quantitative Cartesian Hoare Logic (QCHL) enables verification of certain quantitative properties of programs~\cite{QCHL17}.
QHPs are more expressive than QCHL, the latter counts events within a trace (e.g. memory accesses), while QHPs count the number of traces satisfying certain conditions.

The most closely related work to ours is of Finkbeiner et al.~\cite{mcqhyper-2018} who introduced Quantitative HyperLTL over Kripke structures.
They also introduced a verification algorithm for this logic that is based on maximum model counting.
However, their algorithm does not scale to reasonable-sized systems, and experiments from their paper show that the approach times out when checking an 8-bit leak in a password checker (using 8-bit passwords). 
We differ from their work in three important ways. 
First, our properties are defined over symbolic transition systems rather than Kripke structures. 
This allows modeling and verification of QHPs over infinite-state systems.
Second, our bounds are symbolic, which enables us to express bounds as functions of transition system parameters. 
Finally, our definition of Quantitative HyperLTL is also more expressive.
%because we can express QHPs where the bounds are a function of the transition system parameters. 
It is not possible to convert our QHPs into (non-quantitative) HyperLTL formulas with $k$-traces for any fixed value of $k$.

\noindent \textbf{Verification of ORAMs}: In concurrent work with ours, Barthe et al.~\cite{barthe-19} and Darais et al.~\cite{hicks-20} have introduced specialized mechanisms to prove security of ORAMs. 
Barthe et al.~\cite{barthe-19} introduced a probabilistic separation logic (PSL) that (among other things) can be used to reason about the security of ORAMs.
Unlike QHPs, PSL does not permit quantitative reasoning about probabilities of events and also does not (yet) support machine-checked reasoning.
Darais et al.~\cite{hicks-20} introduce a type system that enforces obliviousness; they use this type system to implement a tree-based ORAM.
Note that QHPs can express specifications other than obliviousness, and obliviousness need not necessarily be a QHP.
%This is in contrast to the version of Quantitative HyperLTL presented by Finkbeiner et al.~\cite{mcqhyper-2018}.

%\noindent \textbf{Information Flow Analyses}:
%Information flow has also been approached from the standpoint of static analysis \cite{TerauchiSas05, clark2007static, MyersPopl99}. 
%However, the tools used in practice often focus on explicit flow of information, where a secret is leaked directly to the output, and disregard implicit flows, where the information can be inferred from flow control, primarily because of a large number of false alarms \cite{king2008implicit}. 
%Recent work by Zhou et al.~\cite{zhou2018static} has built on top of advances in model counting to verify quantitative information flow. 
%However, in contrast to our work, their approach is restricted to the analysis of finite length traces and requires that the bounds be constants.
%Also, they are confined to finding potential vulnerabilities but cannot be used to prove security.

\section{Conclusion}
\label{sec:concl}

Quantitative hyperproperties are a powerful class of specifications that stipulate the existence of a certain number of traces satisfying certain constraints.
Many important security guarantees, especially those involving probabilistic guarantees of security, can be expressed as quantitative hyperproperties.
Unfortunately, verification of quantitative hyperproperties is a challenging problem because these specifications require simultaneous reasoning about a large number of traces of a system. 
In this paper, we introduced a specification language, satisfaction semantics, and a verification methodology for quantitative hyperproperties. 
Our verification methodology is based on reducing the problem of counting traces into that of counting the number of assignments that satisfy a first-order logic formula.
Our methodology enables security verification of many interesting security protocols that were previously out of reach, including confidentiality of access pattern accesses in Path ORAM.

%\ifthenelse{\boolean{arxiv}}{
\section*{Acknowledgements}
We sincerely thank the anonymous reviewers for their insightful comments which helped improve this paper.
This work was supported in part by the Semiconductor Research Corporation under Task 2854 and the Science and Engineering Research Board of India, a unit of the Department of Science and Technology, Government of India.

%}{}

\bibliographystyle{plain}
\bibliography{references}

\ifthenelse{\boolean{arxiv}}
{
\begin{subappendices}
    \renewcommand{\thesection}{\Alph{section}}%
    \section{Proofs}
\label{app:proofs}

\begin{lemma}[Equivalence Class Characterization]
    \label{lem:eqclass}

    Let $\machorig$ be a transition system and $\traces{\mname{}}$ be the set of traces of this transition system.
    Consider the quantitative hyperproperty: $\qformulasub$ for this system where $Z \subset X$.
    We assume the property is well-defined.
    Further suppose:
    
    $\eqrel$ is the equivalence relation over traces corresponding to $\lnot\noteq$.
    \begin{align*}
        \trc{j} \eqrel \trc{k} \text{ iff } 
        \{ \trv{j} \mapsto \trc{j}, \trv{k} \mapsto \trc{k} \} 
            \models \lnot\noteq{}
    \end{align*}

    Let $\tracesetq{}(\trc{0})$ be the function defined as follows.
    \begin{align*}
        \tracesetq{}(\trc{0}) \doteq  
                \Big\{ \trc{1} ~|~ 
                    \{ \trv{0} \mapsto \trc{0}, \trv{1} \mapsto \trc{1} \}   
                        \models_{\traces{\mname{}}} \varphi_{\trc{0}, \trc{1}} \Big\}
    \end{align*}

    Let $\cntv{}(\trc{0})$ be the number of equivalence classes in $\tracesetq{}(\trc{0})$ induced by $\eqrel{}$.

    Separately, let $\tracesetcnt{}(\trc{0})$ be a function that constructs the maximally large set satisfying the following conditions:
    \begin{enumerate}
        \item $\forall \trc{j}, \trc{k} \in \tracesetcnt{}(\trc{0}).~ \trc{j} \neq \trc{k} \myiff \{ \trv{j} \mapsto \trc{j}, \trv{k} \mapsto \trc{k} \} \models\;\noteq{}$, and
        \item $\forall \trc{1} \in \tracesetcnt{}(\trc{0}).~ 
            \Big\{ \trv{0} \mapsto \trc{0}, \trv{1} \mapsto \trc{1} \Big\}
                \models_{\traces{\mname{}}} \varphi_{\trv{0}, \trv{1}}$.
    \end{enumerate}

    Then for every $\trc{0} \in \traces{\mname{}}$, $\cntv{}(\trc{0}) =\sizeof{\tracesetcnt{}(\trc{0})}$.
\end{lemma}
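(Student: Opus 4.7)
The plan is to show that $\tracesetcnt{}(\trc{0})$ is precisely a system of representatives (a transversal) for the equivalence classes of $\eqrel{}$ within $\tracesetq{}(\trc{0})$. Once this is established, the identity $\cntv{}(\trc{0}) = \sizeof{\tracesetcnt{}(\trc{0})}$ is immediate. The proof proceeds in two symmetric directions, bounding $\sizeof{\tracesetcnt{}(\trc{0})}$ from above and from below by $\cntv{}(\trc{0})$.

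For the upper bound $\sizeof{\tracesetcnt{}(\trc{0})} \leq \cntv{}(\trc{0})$, I would first invoke the well-definedness assumption to conclude that $\eqrel{}$ is an equivalence relation on $\traces{\mname{}}$, then note that condition (2) in the definition of $\tracesetcnt{}(\trc{0})$ gives $\tracesetcnt{}(\trc{0}) \subseteq \tracesetq{}(\trc{0})$. Suppose for contradiction that two distinct traces $\trc{j}, \trc{k} \in \tracesetcnt{}(\trc{0})$ fall in the same $\eqrel{}$-equivalence class. Then $\trc{j} \eqrel \trc{k}$, i.e.\ $\{\trv{j} \mapsto \trc{j}, \trv{k} \mapsto \trc{k}\} \models \lnot\noteq{}$, directly contradicting condition (1). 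Hence the map sending each element of $\tracesetcnt{}(\trc{0})$ to its $\eqrel{}$-equivalence class in $\tracesetq{}(\trc{0})$ is injective, giving the desired bound.

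For the lower bound $\sizeof{\tracesetcnt{}(\trc{0})} \geq \cntv{}(\trc{0})$, I would argue by maximality. Suppose some equivalence class $C$ of $\eqrel{}$ in $\tracesetq{}(\trc{0})$ is not met by $\tracesetcnt{}(\trc{0})$. Pick any representative $\trc{*} \in C$. Then $\trc{*} \in \tracesetq{}(\trc{0})$ means $\{\trv{0} \mapsto \trc{0}, \trv{1} \mapsto \trc{*}\} \models \varphi$, so $\trc{*}$ already satisfies condition (2). For every $\trc{k} \in \tracesetcnt{}(\trc{0})$, the equivalence class of $\trc{k}$ is distinct from $C$ (since $C$ is unmet), so $\trc{*} \not\eqrel \trc{k}$, which by well-definedness gives $\{\trv{j} \mapsto \trc{*}, \trv{k} \mapsto \trc{k}\} \models \noteq{}$. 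Thus $\tracesetcnt{}(\trc{0}) \cup \{\trc{*}\}$ still satisfies both conditions, contradicting the maximality of $\tracesetcnt{}(\trc{0})$. Therefore every equivalence class is hit, and the map from the previous paragraph is also surjective.

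The main subtlety I anticipate is handling the well-definedness hypothesis carefully: without it, $\lnot\noteq$ need not be an equivalence relation (see Example~\ref{ex:well-defined-qhps}), in which case the very notion of $\cntv{}(\trc{0})$ as ``number of equivalence classes'' is ill-posed and the lower-bound argument collapses, since two traces that are $\noteq$-different could still lie in overlapping ``classes'' and adding $\trc{*}$ might not preserve condition (1). A secondary minor point is the edge case $\tracesetq{}(\trc{0}) = \emptyset$, where both $\cntv{}(\trc{0}) = 0$ and $\sizeof{\tracesetcnt{}(\trc{0})} = 0$ trivially; this is subsumed by the general argument but worth noting explicitly.
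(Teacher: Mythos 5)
Your proposal is correct and follows essentially the same route as the paper's own (much terser) proof sketch: distinct members of $\tracesetcnt{}(\trc{0})$ must lie in distinct $\eqrel{}$-classes, and maximality forces every class of $\tracesetq{}(\trc{0})$ to be represented. Your version simply makes explicit the surjectivity/maximality direction and the role of well-definedness, which the paper leaves implicit.
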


\begin{proof}
    For every set $\tracesetcnt{}(\trc{0})$ defined as above, every pair of distinct members $\trc{j}$ and $\trc{k}$, satisfy $\noteq$. 
    This means they are in different equivalence classes for the relation $\eqrel$.
    If $\tracesetcnt{}(\trc{0})$ is maximally large, then the number of equivalence classes must be equal to $\cntv{}(\trc{0})$.
\qed
\end{proof}

A consequence of Lemma~\ref{lem:eqclass} is that one can prove satisfiability of a Quantitative HyperLTL formula by counting the number of equivalence classes induced by $\eqrel$ over $\tracesetq{}(\trc{0})$ instead of using the definition shown in Figure~\ref{fig:sem-hyper-ltl}. 
% Trace count lower bound
\tclb*

\begin{proof}
    The proof is by induction on the number of satisfying assignments to $Y$ in the formula $\valid{Y, Z}{}$.
    We will consider the degenerate case when $\valid{Y, Z}{}$ is unsatisfiable separately and then use the case with one satisfying solution as the base case for the induction.

    If $\countsat{\valid{Y,Z}{}}{Y}$ is zero, then the lemma is trivially satisfied.

    {\bf Base case:} 
            If $\countsat{\valid{Y,Z}{}}{Y} = 1$, then by the definition of injective trace enumerations, for every trace $\trc{0}$, we have at least one trace $\trc{1}$ which satisfies $\varphi$.
            This means the number of equivalence classes for $\tracesetq{}(\trc{0})$ as defined in Lemma~\ref{lem:eqclass} is at least one. Thus the QHP is satisfied.

    \newcommand{\ysat}{\mathtt{y}}
    {\bf Induction hypothesis:} Suppose the lemma holds for all formulas $\mathcal{F}(Y, Z)$ such that $\countsat{\mathcal{F}(Y,Z)}{Y} = n$;
    $\mathcal{F}(Y, Z)$ and $\tren{}$ form an injective trace enumeration for the system $\mname{}$.
    By assumption, $\valid{}{}$ and $\tren{}$ also form an injective trace enumeration for $\mname{}$.
    
    {\bf Inductive step:} We now have to show that the lemma holds for an arbitrary formula $\valid{Y, Z}{}$ such that $\countsat{\valid{Y, Z}{}}{Y} = n + 1$ with $\valid{}{}$ and $\tren{}$ constituting an injective trace enumeration for the QHP on $\mname{}$.

    To do this, let us consider $\ysat$ which is a satisfying assignment to $\valid{Y, Z}{}$. 
    The formula $\validprime{Y}{} \doteq \valid{Y, Z}{} \land Y \neq \ysat{}$ has $n$ satisfying assignments to $Y$ for the same value of $Z$.
    Further, $\validprime{Y}{}$ and $\tren{}$ are also an injective trace enumeration for the QHP on the system $\mname{}$.
    By the induction hypothesis, the lemma holds for $\validprime{Y}{}$.
    This means $\mname \models \qformulavalidprime{\geq}$ and by Lemma~\ref{lem:eqclass} the set $\tracesetq{}(\trc{0})$ has at least $n$ equivalence classes for every $\trc{0} \in \traces{\mname{}}$, each class corresponding to the satisfying assignments to $Y$ in $\validprime{Y, Z}{}$ for $Z = \valueof{Z}{\trcij{0}{0}}$.
    If $\tracesetq{}(\trc{0})$ has more than $n$ equivalence classes, the lemma holds.

    Instead, suppose that $\tracesetq{}(\trc{0})$ has exactly $n$ equivalence classes.
    Now consider the assignment $\ysat{}$ and its corresponding trace $\trc{\ysat{}}$ in Property~\ref{eqn:inj-te-1}. By Property~\ref{eqn:inj-te-3}, $\trc{y}$ is in a different equivalence class from all of the traces corresponding to assignments to $Y$ in $\validprime{Y, Z}{}$. Contradiction!
    Therefore, $\tracesetq{}(\trc{0})$ has at least $n+1$ equivalence classes and so the inductive step holds.
    \qed
\end{proof}

% Trace count upper bound
\tcub*

\begin{proof}
    The proof is similar in structure to Lemma~\ref{lem:tclb} and is also by induction on the number of satisfying assignments to $Y$ in $\valid{Y, Z}{}$.

    As before, we will treat the degenerate case when $\countsat{\valid{Y, Z}{}}{Y} = 0$ separately.
    In this case, then Property~\ref{eqn:surj-te-1} must hold vacuously and there are no traces that satisfy $\varphi$. $\tracesetq{}(\trc{0})$ is the empty set for at least one $\trc{0} \in \traces{\mname{}}$. The QHP is satisfied.

    {\bf Base Case}: Since $\countsat{\valid{Y, Z}{}}{Y} = 1$, $\fixy{}_1\neq\fixy{}_2$ is false and so Property~\ref{eqn:surj-te-2} holds vacuously.
    This means there is only one equivalence class of traces, and so the lemma is true for the base case.

    {\bf Induction hypothesis:} As in the previous proof, suppose the lemma holds for all formulas $\mathcal{F}(Y, Z)$ such that $\countsat{\mathcal{F}(Y,Z)}{Y} = n$; 
    $\mathcal{F}(Y, Z)$ and $\tren{}$ form a surjective trace enumeration for the system $\mname{}$.
    By assumption, $\valid{}{}$ and $\tren{}$ form a surjective trace enumeration for $\mname{}$.
    
    \newcommand{\ysat}{\mathtt{y}}

    {\bf Inductive step:} We now have to show that the lemma holds for an arbitrary formula $\valid{Y, Z}{}$ such that $\countsat{\valid{Y, Z}{}}{Y} = n + 1$ when $\valid{}{}$ and $\tren{}$ constitute a surjective trace enumeration for the QHP on the transition system $\mname{}$.
    We will again consider $\ysat$, a satisfying assignment to $\valid{Y, Z}{}$ and construct the formula $\validprime{Y}{} \doteq \valid{Y, Z}{} \land Y \neq \ysat{}$.
    $\validprime{}{}$ has $n$ satisfying assignments to $Y$ for this value of $Z$.
    Consider the set $\traces{\mname{}'}$ constructed by removing all traces from $\traces{\mname{}}$ which are related to some trace $\trc{0}$ in the relation $\tren{}$ for $Y = \ysat{}$:
    $\traces{\mname{}'} = \traces{\mname{}} - \{ \trc{1} ~|~ \exists \trc{0}.~\tren{\ysat, \trc{0}, \trc{1}} \}$.
    For every $\tracesetq{}'(\trc{0})$ constructed analogously to $\tracesetq{}(\trc{0})$ from this set $\traces{\mname{}'}$, the set $\tracesetq{}'(\trc{0})$ has at least one less equivalence class (for the relation $\eqrel$) than $\tracesetq{}(\mname{})$.
    By the induction hypothesis, the lemma holds for the bound $\countsat{\validprime{Y, Z}{}}{Y}$ for set of traces $\traces{\mname{}'}$.
    However, $\tracesetq{}(\trc{0})$ has at most one more equivalence class than $\tracesetq{}'(\trc{0})$ and $\countsat{\valid{Y,Z}{}}{Y} = \countsat{\validprime{Y,Z}{}}{Y} + 1$.
    Therefore, the inductive step holds.
    \qed
\end{proof}

    \section{Experiments}
\label{appendix:microbenchmarks}
This section provides more detail on the benchmarks used in our experimental evaluation.  
%We give the verification objectives in these systems, and define the property that we prove for each of them.

      %--------------------------------------------------------------------------------------------------
      %                   Balance Transfer
      %--------------------------------------------------------------------------------------------------
    \paragraph{Electronic Purse:}
    This example models an electronic purse, which was also studied by Backes et al.~\cite{backes-09}, where a fixed amount $(\decr)$, modeled as an integer, is debited from the purse, with secret initial balance (also modeled as an integer), until the balance is insufficient for this transaction.
    The adversary-observable state consists of each debit and the number of debits from the purse.
    We show a deniability property which states that the number of traces with identical observations but different initial balances are at least $\decr{}$: 
   $ 
      \forall \trv{0}.\countq{\trv{1}}{\ltlf{(\delta{}_{\trv{j}, \trv{k}})}}{\ltlg{(\psi_{\trv{0}, \trv{1}})}}{\geq}{\decr}
%      \label{eqn:deniability-balance-transfer}
    $.
    The binary predicate $\delta$ captures the fact that the traces have different balances, and $\psi$ ensures that they have identical adversary observations.

%    The above property states that given a trace of adversary observations, there are at least  $\decr - 1$ other traces with the same observations but different value of the secret balance.
    
    % To verify the QHP stated above, consider a trace of the system with the initial secret balance as $bal_0$. 
    % To generate a new trace with different secret balance, consider an integer $E$, except $E = 0$. 
    % If the new secret balance, given by $bal_1 = bal_0 + E$, is such that 
    % $\Bigl\lfloor{\frac{bal_1}{\decr}} \Bigr\rfloor = \Bigl\lfloor \frac{bal_0}{\decr} \Bigr\rfloor$
    % then the corresponding trace will satisfy the counting quantifier in the above QHP. 
    % Clearly, there are $\decr - 1$ values of $E$ possible, which lower bounds the number of traces counted by the counting quantifier.
        
    %--------------------------------------------------------------------------------------------------
    %                   Password Checker
    %--------------------------------------------------------------------------------------------------
    \paragraph{Password Checker:}
    This example models a program that checks a password. 
    This example is interesting because Finkbeiner et al.~\cite{mcqhyper-2018} studied the same problem and found their technique times out when checking for 8-bit leakage for an 8-bit password. 
    We are able to provide a proof of quantitative non-interference (QNI) for a password of unbounded length (at the cost of manual construction of the trace enumeration relation).
    The QNI property states that the maximum information leaked to the attacker is $n$ bits, where $n$ is a state variable/parameter of the system representing the bit-length of the password: 
    $ 
      \forall \trv{0}.\countq{\trv{1}}{\ltlf{(\delta{}_{\trv{j}, \trv{k}})}}{\ltlg{(\psi_{\trv{0}, \trv{1}})}}{\leq}{2^n-1}
    $.
    $\delta$ requires the adversary observations to be different while $\psi$ ensures adversary inputs are the same in the two traces. 
    In other words, we prove that nothing besides the password is leaked to the attacker even if the adversary can make unlimited login attempts.

    \paragraph{Array Shuffle:}
    This implements a variant of the Fisher-Yates shuffle. 
    We chose this example because producing random permutations of an array is an important component of several cryptographic protocols (e.g., Ring ORAM~\cite{roram-15}) and a buggy shuffle algorithm that does not produce all permutations would result in vulnerabilities in these protocols.
    We prove a quantitative information flow property stating that all possible permutations are indeed generated by the algorithm: $
      \forall \trv{0}.\countq{\trv{1}}{\ltlf{(\delta{}_{\trv{j}, \trv{k}})}}{\ltlg{(\psi_{\trv{0}, \trv{1}})}}{\geq}{n!}$.
    As usual, $\delta$ requires the output of the shuffle to be different, while $\psi$ ensures that the input arrays are the same.
    This example is interesting because we are able to prove for unbounded-length input arrays that the shuffle does indeed produce all permutations.
    Note such an unbounded proof is not possible with techniques based solely on model counting.

\end{subappendices}
}
{
    % no appendices for camera-ready CAV paper.
}

\end{document}